\newcommand{\ket}[1]{\left| #1 \right\rangle}
\newcommand{\CZ}{\ensuremath{\wedge Z}\xspace}
\newcommand{\fdhilb}{\bf FdHilb}
\newcommand{\wfdhilb}{\fdhilb_{\text{wp}}}
\newcommand{\catD}{\ensuremath{{\cal D}}\xspace}
\newcommand{\denote}[1]{\llbracket #1 \rrbracket} 
\newcommand{\ldenote}[1]{\left\llbracket #1 \right\rrbracket}
\newcommand{\zxcalculus}{\textsc{zx}-calculus\xspace}
\newtheorem{theorem}{Theorem}[section]
\newtheorem{proposition}[theorem]{Proposition}
\newtheorem{lemma}[theorem]{Lemma}
\theoremstyle{definition}\newtheorem{example}[theorem]{Example}
\theoremstyle{definition}
\theoremstyle{definition}\newtheorem{definition}[theorem]{Definition}
\theoremstyle{definition}
\newtheorem*{remark}{Remark}
\tikzstyle{greendot}=[circle,fill=green,draw=black,inner sep=2pt,minimum size=0.20cm]
\tikzstyle{reddot}=[circle,fill=red,draw=black,inner sep=2pt,minimum size=0.20cm]
\tikzstyle{greenrot}=[circle,fill=green,draw=black,inner
\tikzstyle{redrot}=[circle,fill=red,draw=black,inner sep=2pt,minimum size=0.20cm]
\tikzstyle{blackrot}=[circle,fill=black,draw=black,inner sep=1.5pt,minimum size=0.10cm]
\tikzstyle{box}=[rectangle,fill=white,draw=black,text centered,inner sep=0pt,minimum size=0.15cm]
\tikzstyle{empty}=[circle,fill=white,draw=white,inner sep=0pt,minimum size=0cm]
\tikzstyle{none}=[]
\title{Pivoting makes the \zxcalculus complete for real stabilizers}
\author{Ross Duncan
\institute{University of Strathclyde\\ Glasgow, UK}
\email{ross.duncan@strath.ac.uk}
\and
Simon Perdrix
\institute{CNRS, LORIA UMR 7503, CARTE Project-Team\\ Nancy, France}
\email{simon.perdrix@loria.fr}
}
\begin{document}

\maketitle

\begin{abstract}
  We show that pivoting property of graph states cannot be derived
  from the axioms of the \zxcalculus, and that pivoting does not imply
  local complementation of graph states.  Therefore the \zxcalculus
  augmented with pivoting is strictly weaker than the calculus
  augmented with the Euler decomposition of the Hadamard gate.  We
  derive an angle-free version of the  \zxcalculus and show that it is
  complete for real stabilizer quantum mechanics.
\end{abstract}

The \zxcalculus is a formal theory for reasoning about quantum
computational systems \cite{Coecke:2009aa}.  It consists of a
graphical language based on the Pauli $Z$ and $X$ observables, and a
collection of axioms expressed as graph rewrite rules.  The
\zxcalculus is expressive enough to represent any quantum circuit, and
its equations are complete for the stabilizer fragment of quantum
mechanics \cite{Backens:2012fk}.  Due to its graphical nature, and its
close relationship to the $Z$ and $X$ observables, the
\zxcalculus is particularly well adapted to the study of graph states
and measurement-based quantum computation
\cite{Duncan:2010aa,Duncan:2012uq}.

In addition to the two observables, the \zxcalculus also contains an
operator for the Hadamard map: this is the map which exchanges the $Z$
and $X$ bases, and thus provides a duality principle for the graphical
language.  In previous work \cite{Duncan:2009ph} the authors showed
that if the Hadamard can be expressed in terms of $Z$ and $X$
rotations---that is, as an Euler decomposition---then Van Den Nest's
theorem \cite{VdN04} about local complementation of graph states
follows, and vice versa.  Furthermore, these results cannot be derived
from the original axioms, hence the theory ``\zxcalculus + Euler'' is
strictly stronger than the plain \zxcalculus.

In this paper we find a theory intermediate between the two, albeit
having a similar flavour.  We consider an operation on graph states
called \emph{pivoting} and show that its defining property is
equivalent to the possibility to express (one of) the Pauli matrices
in terms of the Hadamard.  Since pivoting can be done via local
complementation, ``\zxcalculus + Euler'' is stronger than
``\zxcalculus + Pivot''.  However, we will show that, once again,
these equations cannot be derived from the plain \zxcalculus.

The theory ``\zxcalculus + Euler'' is known to be complete for the
stabiliser fragment of quantum mechanics \cite{Backens:2012fk}: the
stabiliser fragment corresponds to the sub-calculus where all angles
are multiples of $\pi/2$. We show that the intermediate calculus
``\zxcalculus + Pivot'' is complete for the \emph{real} stabiliser
fragment of quantum mechanics, and that this fragment admits an
angle-free axiomatisation.

Real quantum mechanics is sufficient for quantum computing
\cite{BV97}, in the sense that any unitary evolution on $n$-qubits
can be simulated (using a simple encoding) by a real unitary evolution
acting on $n+1$ qubits. As a consequence, while not complete for
(complex) quantum mechanics, the intermediate calculus ``\zxcalculus +
Pivot'' might be useful and simpler than the the whole ``\zxcalculus +
Euler'' for proving properties of quantum systems, for example via
rewriting.  


\begin{remark}\label{rem:what-are-the-axioms}
  There is some variation about which axioms comprise the \zxcalculus.
  In \cite{Duncan:2009ph} we considered fewer axioms than we do here;
  whereas in some later work, notably \cite{Backens:2012fk}, the Euler
  decomposition of the Hadamard is included as an axiom.
\end{remark}


\section{The Graphical Formalism}
\label{sec:graphical-formalism}

We recall  the syntax, semantics, and basic properties of the
\zxcalculus.  For a full exposition, see \cite{Coecke:2009aa}.

\begin{definition}
  \label{def:opengraph}
  An \emph{open graph} is a triple $(G,I,O)$ consisting of a finite
  undirected graph $G =(V,E)$ and distinguished subsets $I,O \subseteq
  V$ of degree one vertices, called the \emph{inputs} and
  \emph{outputs}, respectively.  The set of vertices $I\cup O$ is
  called the \emph{boundary} of $G$, and $V\setminus (I\cup O)$ is the
  \emph{interior} of $G$.  An open graph is called \emph{empty} if its
  interior is empty; it is called \emph{prime} if it is connected and
  its interior is a singleton.
\end{definition}

We view the inputs and outputs as finite ordinals, and write
$\gamma:n\to m$ for a graph with $n$ inputs and $m$ outputs.  Open
graphs form a self-dual compact category: composition is achieved by
identifying the inputs of one graph with the outputs or another and
erasing the resulting vertices; the tensor product is simple
juxtaposition of graphs.  The unit and counit maps are generated from
the unique empty graphs $d:0\to 2$ and $e:2\to 0$.  Note that, due to
general results
\cite{JS:1991:GeoTenCal1,Selinger:dagger:2005,Duncan:thesis:2006}, a
pair of graphs can be deformed from one to other if and only if they
are equal by the axioms of compact categories.

The terms of the \zxcalculus are certain open graphs we call
\emph{diagrams}.

\begin{definition}\label{def:diagrams}
  A \emph{diagram} is an arrow of the free category \catD
  generated by the following prime graphs:
  \[
  Z^n_m(\alpha) = \greenspider{\alpha} 
  \qquad 
  X^n_m(\alpha) = \redspider{\alpha} 
  \qquad 
  H = \smdiag{\begin{tikzpicture}
	\begin{pgfonlayer}{nodelayer}
		\node [style=empty] (0) at (-1, 2.25) {};
		\node [style=empty] (1) at (-1, 1.25) {};
		\node [style=box] (2) at (-1, 1.75) {};
	\end{pgfonlayer}
	\begin{pgfonlayer}{edgelayer}
		\draw (0) to (2);
		\draw (2) to (1);
	\end{pgfonlayer}
\end{tikzpicture}}
  \]
  where $n$ and $m$ are the number of inputs and outputs respectively,
  and $\alpha \in [0,2\pi)$ is  an angle called \emph{phase}.  If
  $\alpha =  0 $ it will be omitted from the diagram.
\end{definition}

We define the semantics of diagrams via an interpretation functor 
$\denote{\cdot}: \catD \to \wfdhilb$,
where $\wfdhilb$ is  the
category of complex Hilbert spaces and linear maps under the equivalence
relation $f \equiv g$ iff there exists $\theta$ such that $f =
e^{i\theta} g$.  A diagram $f:n\to m$ output defines a
linear map  $\denote{f} : \mathbb{C}^{\otimes 2n} \to
\mathbb{C}^{\otimes 2m}$ as follows:
\begin{align*}
  \denote{Z^n_m(\alpha)} &= \left\{
  \begin{array}{rcl}
    \ket{0}^n & \mapsto & \ket{0}^m \\
    \ket{1}^n & \mapsto & e^{i\alpha}\ket{1}^m 
  \end{array} \right.  
\\
\\
\denote{X^n_m(\beta)} &= \left\{
\begin{array}{rcl}
  \ket{+}^n & \mapsto & \ket{+}^m \\
  \ket{-}^n & \mapsto & e^{i\beta}\ket{-}^m 
\end{array} \right.
\\
\\
  \denote{H} &= \frac{1}{\sqrt{2}}
  \begin{pmatrix*}[r]
    1&1\\1&-1
  \end{pmatrix*} \,.
\end{align*}
The map $\denote{\cdot}$ extends in the evident way to a monoidal
functor.  We can now see where the name \zxcalculus calculus comes
from:  the $Z$ vertices are defined in terms of the $Z$ basis
of $\mathbb{C}^2$ while the $X$ vertices are defined in terms of the
$X$ basis.

The interpretation of \catD contains a universal set of quantum gates.
Note that $Z^1_1(\alpha)$ and $X_1^1(\alpha)$ are the rotations around the
$X$ and $Z$ axes, and in particular when $\alpha = \pi$ they yield
the Pauli $X$ and $Z$ matrices. The \CZ is defined by:
\[
\CZ ~=  ~\vcenter{\hbox{\input{diag/contZ.tikz}}}
\]

In order to obtain the \zxcalculus we quotient the free category \catD
by the equations shown in Figure~\ref{fig:eqns}; the quotient category
we denote by $\mathbb{D}$.

\begin{figure}[t]
  \[
\begin{array}{rclcrclcrcl}
\RULEgreenspiderlhs &=& \RULEgreenspiderrhs 
&&
\RULEantilooplhs &=& \RULEantilooprhs
&&
\RULEidentitylhs &=& \RULEidentityrhs
\\
& \text{(S1)} &&&& \text{(S2)} &&&& \text{(S3)}
\\
\\
\RULEpicommutelhs &=& \RULEpicommuterhs
&&
\RULEcopyinglhs &=& \RULEcopyingrhs
&&
\RULEcolorchangelhs &=& \RULEcolorchangerhs 
\\
& \text{($\pi$)} &&&& \text{(C)} &&&& \text{(H1)}
\\
\\
\RULEhopflhs &=& \RULEhopfrhs
&&
\RULEbialgebralhs &=& \RULEbialgebrarhs 
&&
\RULEhsqrdlhs &=& \RULEhsqrdrhs
\\
& \text{(Hpf)} &&&& \text{(Bi)} &&&& \text{(H2)}
\end{array}
\]
  \caption{Equations for the \zxcalculus}
  \label{fig:eqns}
\end{figure}

The equations of Fig~\ref{fig:eqns} are sound with respect to the interpretation
functor $\denote{\cdot}$ introduced above.

\begin{proposition}  \label{prop:sound}
  There exists a canonical functor $\denote{\cdot}_\sim: \mathbb{D} \to
  \wfdhilb$ making the following diagram commute:
  \begin{diagram}
    \catD & \rOnto & \mathbb{D} \\
    & \rdTo<{\denote{\cdot}} & \dTo>{\denote{\cdot}_\sim} \\
    && \wfdhilb
  \end{diagram}
\end{proposition}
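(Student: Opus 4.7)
The plan is to invoke the universal property of the quotient category. By construction, $\mathbb{D} = \catD / \sim$, where $\sim$ is the smallest congruence on $\catD$ (with respect to composition and tensor) containing the equations of Figure~\ref{fig:eqns}. A monoidal functor out of $\catD$ factors uniquely through the quotient projection $\catD \twoheadrightarrow \mathbb{D}$ precisely when it identifies the two sides of each defining equation. Since $\denote{\cdot}: \catD \to \wfdhilb$ is already a monoidal functor, it suffices to verify that for every equation $f = g$ in Figure~\ref{fig:eqns}, we have $\denote{f} \equiv \denote{g}$ in $\wfdhilb$ (i.e.\ equality up to a nonzero scalar phase). Once this is established, the induced functor $\denote{\cdot}_\sim : \mathbb{D} \to \wfdhilb$ is defined by $\denote{[f]}_\sim := \denote{f}$ on equivalence classes, and the triangle commutes on the nose.

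Thus the proof reduces to a finite, case-by-case soundness check, one equation at a time. For each rule I would unfold the definitions of $\denote{Z^n_m(\alpha)}$, $\denote{X^n_m(\beta)}$, and $\denote{H}$ given in Definition~\ref{def:diagrams}, and compute the linear maps on the standard basis vectors $\ket{0},\ket{1}$ (or the $\ket{\pm}$ basis for the $X$-spiders, whichever is more convenient). Rules (S1)--(S3) follow immediately from the definition of $\denote{Z^n_m(\alpha)}$ as the linear map sending $\ket{0}^n \mapsto \ket{0}^m$ and $\ket{1}^n \mapsto e^{i\alpha}\ket{1}^m$ (spider fusion, loop removal, and the trivial one-legged spider being the identity). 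Rules (H1) and (H2) are the characterising properties of $H$: it interchanges the $Z$ and $X$ bases, and $H^2 = \mathrm{id}$. Rules ($\pi$) and (C) are short computations relating the two bases via $X\ket{0}=\ket{1}$, $X\ket{1}=\ket{0}$ and $\ket{\pm} = \tfrac{1}{\sqrt 2}(\ket 0 \pm \ket 1)$; in particular ($\pi$) captures $X\cdot Z(\alpha) = Z(-\alpha)\cdot X$ up to a global phase.

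The only slightly more involved verifications are the Hopf law (Hpf) and the bialgebra law (Bi). For (Hpf) one computes that the composite of a red copoint with a green point through a single edge is a scalar, so doubling the edge produces a disconnected pair, with an overall scalar absorbed by the equivalence $\equiv$ of $\wfdhilb$. For (Bi), one checks that the two Frobenius algebras associated to the $Z$- and $X$-observables form a bialgebra (up to a global scalar); this is a direct four-basis-element calculation on the two-qubit input space. The main technical care throughout is to keep track of normalisation scalars such as $1/\sqrt 2$ arising from $H$ and the $X$-basis, and to observe that these collapse to a global phase---this is exactly why the codomain is $\wfdhilb$ rather than $\fdhilb$.

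The main obstacle, then, is not any single equation in isolation but the bookkeeping of scalar factors: one has to confirm that every mismatch between the two sides of an axiom is indeed a global nonzero scalar (and in particular never zero), so that passage to $\wfdhilb$ really does identify them. Once this is done uniformly for the nine equations of Figure~\ref{fig:eqns}, monoidality and functoriality of $\denote{\cdot}_\sim$ are inherited from $\denote{\cdot}$, and the commuting triangle is immediate from the construction.
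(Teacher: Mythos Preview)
Your approach is correct and is exactly the standard argument: invoke the universal property of the quotient and reduce to a finite soundness check of the axioms in Figure~\ref{fig:eqns} against the concrete interpretation in $\wfdhilb$, tracking scalars so that every discrepancy is a nonzero global factor absorbed by the $\equiv$ relation. The paper itself does not supply a proof of Proposition~\ref{prop:sound}: it simply asserts, immediately before the statement, that ``the equations of Fig~\ref{fig:eqns} are sound with respect to the interpretation functor $\denote{\cdot}$'' and moves on. So there is nothing to compare against; your write-up is a faithful expansion of what the paper leaves implicit.

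Two very minor wording points. For (S3) the rule concerns the $1$-input, $1$-output green (or red) node with phase $0$ being the identity, not a ``one-legged'' spider. And your gloss on ($\pi$) as $X\cdot Z(\alpha)=Z(-\alpha)\cdot X$ captures the $1$-to-$1$ instance; the axiom as stated in Figure~\ref{fig:eqns} is the $n$-ary version where the single $X(\pi)$ on top pushes through the green $\alpha$-spider, negating the phase and depositing an $X(\pi)$ on each lower leg. Neither affects the validity of your argument.
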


\noindent
In the rest of this paper we won't make any distinction between
$\catD$ and $\mathbb{D}$, nor between the interpretation functors.
Indeed, we will abuse notation and refer to both as $\denote{\cdot}$.

\begin{remark}\label{rem:diff-generators}
  Note that in the presence of the equations (S1)-(S3), which we refer
  to collectively as the ``spider rule'', we could have made other
  choices for the generators of the $X$ and $Z$ families of vertices.
  For example, the prime graphs
  \[
  \delta = \vcenter{\hbox{\input{diag/greendelta.tikz}}}\qquad\quad
  \epsilon = \vcenter{\hbox{\begin{tikzpicture}
	\begin{pgfonlayer}{nodelayer}
		\node [style=empty] (0) at (-1, 2) {};
		\node [style=greenrot] (1) at (-1, 1.5) {};
	\end{pgfonlayer}
	\begin{pgfonlayer}{edgelayer}
		\draw (0) to (1);
	\end{pgfonlayer}
\end{tikzpicture}}} \qquad\quad
  p_\alpha = \vcenter{\hbox{\begin{tikzpicture}
	\begin{pgfonlayer}{nodelayer}
		\node [style=empty] (0) at (-0.5, 2.5) {};
		\node [style=empty] (1) at (-0.5, 1.5) {};
		\node [style=greendot] (2) at (-0.5, 2) {$\alpha$};
	\end{pgfonlayer}
	\begin{pgfonlayer}{edgelayer}
		\draw (2) to (1);
		\draw (2) to (0);
	\end{pgfonlayer}
\end{tikzpicture}}}
  \]
  are used in the formulation that emphasises the fact that each
  family forms a Frobenius algebra.  From that perspective the spider
  rule is effectively a normal-form theorem; see \cite{Coecke:2009db}
  for details.
\end{remark}

\begin{proposition}
  \label{prop:basicprops} The following are direct consequences of the axioms.
  \begin{itemize}
  \item Any connected diagram containing only $Z$ or only $X$ vertices
    is equivalent to a prime graph.
  \item Any diagram without any $H$ is equivalent to a simple bipartite graph.
  \item Any diagram is equivalent to (a) a diagram with no $Z$ vertices;
    and (b) a diagram with no $X$ vertices.
  \item Any equation which holds between two graphs, also holds
    with $Z$ and $X$ exchanged.
  \end{itemize}
\end{proposition}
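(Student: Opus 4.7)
The plan is to derive all four claims from the axioms of Figure~\ref{fig:eqns}, establishing color symmetry (the fourth claim) first so that it may be invoked freely in the other parts.

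For color symmetry, the key observation is that conjugating a diagram by a Hadamard on every wire exchanges the two colors. Using (H1) together with the spider rule to isolate legs one at a time, I would show that a green spider with Hadamards on every leg is equal to the corresponding red spider, and by (H2) this ``wire-Hadamard conjugation'' is involutive. Consequently, if $f = g$ is derivable then the equation obtained by inserting Hadamards on every wire of both sides is also derivable, and this new equation is precisely the color-swapped form of $f = g$. From this point on every axiom may be used in its color-swapped form as well.

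The first two claims are then routine rewrite arguments. For a connected all-green diagram, iterated application of (S1) along edges merges all vertices into a single prime green vertex, with (S2) absorbing any self-loops that appear; the all-red case is handled by color symmetry. For a Hadamard-free diagram, exhaustive application of (S1) and its red counterpart merges each maximal monochromatic subgraph into a single vertex, leaving a bipartite multigraph between green and red vertices; any parallel green-red edges are then removed two at a time by (Hpf), yielding a simple bipartite graph.

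For the third claim I would eliminate all $Z$ vertices as follows. Use the spider rule to split each green spider into a star of bivalent green vertices joined at a central green vertex, then apply (H1) to each bivalent piece to convert it into a red piece with a Hadamard on the outer side. Reassemble the red pieces using the (now available) red spider rule. Hadamards lying on edges between two formerly-green vertices come in pairs and cancel by (H2); Hadamards on boundary edges and on edges to vertices that were red to begin with remain. The dual elimination of $X$ vertices is then immediate by color symmetry. The main subtlety, and the only non-trivial bookkeeping in the proposition, is exactly this last point: (H1) converts only one leg at a time and introduces Hadamards on all remaining legs, so the split-and-recombine procedure is the clean way to prevent an uncontrolled proliferation of Hadamards.
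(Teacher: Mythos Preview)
The paper does not prove this proposition: it is stated as a list of ``direct consequences of the axioms'' and left without argument (the spider normal form is gestured at in Remark~\ref{rem:diff-generators}, with a citation). Your argument is correct and is the standard one for these facts.

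One small slip in the third item: applying (H1) to a bivalent green piece yields a red piece with a Hadamard on \emph{each} side, not only the outer side. The inner Hadamards then surround the central green vertex of your star, and a further application of (H1) (together with (H2)) converts that central vertex to red as well; only then does the red spider rule reassemble the pieces. With that correction your cancellation bookkeeping for the remaining Hadamards goes through exactly as you wrote it.
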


\begin{remark}
  Note that although Figure \ref{fig:eqns} seems to favour one colour
  over the other, by the last point of Proposition
  \ref{prop:basicprops} we know that all the rules apply with the
  colours reversed.\label{rem:2}
\end{remark}

\noindent {\bf Euler decomposition of $H$.} The following axiom is not part of the definition of the \zxcalculus
\begin{equation}
\vcenter{\hbox{\begin{tikzpicture}
	\begin{pgfonlayer}{nodelayer}
		\node [style=empty] (0) at (-1, 2.25) {};
		\node [style=empty] (1) at (-1, 1.25) {};
		\node [style=box] (2) at (-1, 1.75) {};
	\end{pgfonlayer}
	\begin{pgfonlayer}{edgelayer}
		\draw (0) to (2);
		\draw (2) to (1);
	\end{pgfonlayer}
\end{tikzpicture}}} 
= ~
\vcenter{\hbox{\input{diag/axiomEuler-r.tex}}} \label{eq:euler}\tag{EU}
\end{equation}
In \cite{Duncan:2009ph} we proved that the Euler decomposition cannot
be derived from the axioms of the \zxcalculus; however, in that paper  we
considered slightly weaker axioms.  It is straight-forward to give a
counter-model for the \zxcalculus of today.

\begin{lemma}\label{lem:no-euler-in-zx}
  The Euler decomposition of $H$ cannot be derived by the rules of the
  ZX calculus.
 \begin{proof}
   We define an alternative interpretation functor $\denote\cdot
   _0:\catD \to \wfdhilb$ by
   \begin{align*}
     \denote{H}_0 &= \denote{H} \\
     \denote{Z^n_m(\alpha)}_0 &= \denote{Z^n_m(0)} \\
     \denote{X^n_m(\beta)}_0 &= \denote{X^n_m(0)}\,.
   \end{align*}
   It's easy to verify that all the equations of Figure \ref{fig:eqns}
   still hold under $\denote{\cdot}_0$ but \eqref{eq:euler} fails.
 \end{proof}
\end{lemma}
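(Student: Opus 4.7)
The plan is to construct a countermodel: a functor $\denote{\cdot}_0 : \catD \to \wfdhilb$ that validates all the axioms in Figure~\ref{fig:eqns} but refutes \eqref{eq:euler}. The natural candidate, already hinted at by the paper's own sketch, is to send each spider $Z^n_m(\alpha)$ and $X^n_m(\alpha)$ to the corresponding zero-phase spider $Z^n_m(0)$ and $X^n_m(0)$, while sending $H$ to itself. Concretely, I would define $\denote{\cdot}_0$ as a monoidal functor on the generators in exactly this way and extend it freely. Semantically, $\denote{Z^n_m(0)}$ is the ``copy/comparison'' map $\ket{0}^n\mapsto\ket{0}^m,\ \ket{1}^n\mapsto\ket{1}^m$, and similarly in the $X$ basis; the Hadamard is preserved verbatim.

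The main task is then to verify soundness of each axiom under $\denote{\cdot}_0$. The spider rules (S1)--(S3) and the colour-change (H1), Hopf (Hpf), bialgebra (Bi) and Hadamard-squared (H2) rules are manifestly phase-free or involve only the additive structure of phases, so they descend immediately to the zero-phase case (since $0+0=0$). The only axioms that deserve a closer look are the $\pi$-commutation rule and the copying rule (C), both of which mention an arbitrary phase $\alpha$ as well as specific Pauli phases $\pi$. In each case, under $\denote{\cdot}_0$ every phase (including $\alpha$, $-\alpha$ and $\pi$) collapses to $0$, and the resulting equation reduces to a phase-free statement that is already a consequence of the remaining structural rules, hence holds in the original model.

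Finally, I would exhibit the failure of \eqref{eq:euler}. Applying $\denote{\cdot}_0$ to the right-hand side replaces each of the three $\pi/2$-spiders by a phaseless $1\to 1$ green or red spider, which by (S1)--(S3) is the identity wire; so the right-hand side is sent to $\denote{\id{}}$. The left-hand side, however, is sent to $\denote{H}$, and $\denote{H}\neq \denote{\id{}}$ even up to global phase. Therefore \eqref{eq:euler} does not hold under $\denote{\cdot}_0$, and hence cannot be a consequence of the other axioms.

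The only real obstacle is the bookkeeping: going through each of (S1)--(S3), ($\pi$), (C), (H1), (Hpf), (Bi), (H2) to confirm that setting phases to zero everywhere still yields a valid equation. This is routine rather than conceptually hard, since the axioms that mention phases either add them, negate them, or copy a Pauli phase through a spider — all operations that are stable under the replacement $\alpha\mapsto 0$.
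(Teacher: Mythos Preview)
Your proposal is correct and follows exactly the same approach as the paper: the same countermodel $\denote{\cdot}_0$ that zeroes all phases while keeping $H$ fixed, the same verification that the axioms of Figure~\ref{fig:eqns} survive this collapse, and the same observation that \eqref{eq:euler} then asserts $\denote{H}=\id{}$, which is false in $\wfdhilb$. Your write-up simply spells out the axiom-by-axiom check that the paper leaves as ``easy to verify''.
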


\section{Graph states and Local complementation}
\label{sec:graph-states-local}

\begin{definition}\label{def:graphstate}
  Let $G=(V,E)$ be an undirected graph.  Then the \emph{graph state}
  $\ket G$ is defined by
  \[
  \ket G = \left( \prod_{uv\in E} \CZ_{uv} \right) \bigotimes_{v\in V}
  \ket{+}
  \]
\end{definition}

Given a graph $G$ we can directly write down the diagram $D_G$ such
that $\denote{D_G} = \ket{G}$ as follows: (1) for each $v\in V$ we add
a \emph{Z} vertex, connected to an output; (2) for each edge $uv \in
E$ we add an $H$ vertex, connected to those $Z$ vertices corresponding
to the vertices $u$ and $v$.

\begin{example}\label{ex:1}
  Consider the case when $G$ is just a triangle:
  \[
  G = \smdiag{\begin{tikzpicture}
	\begin{pgfonlayer}{nodelayer}
		\node [style=blackrot] (0) at (0, 1.5) {};
		\node [style=blackrot] (1) at (-0.5, 0.5) {};
		\node [style=blackrot] (2) at (0.5, 0.5) {};
	\end{pgfonlayer}
	\begin{pgfonlayer}{edgelayer}
		\draw (0) to (1);
		\draw (1) to (2);
		\draw (2) to (0);
	\end{pgfonlayer}
\end{tikzpicture}}
  \qquad\qquad
  D_G = \smdiag{\input{diag/triangle-state.tikz}}
  \]
\end{example}

\begin{proposition}\label{prop:stab-graph-state}
  Let $G = (V,E)$ be a graph with $v\in V$ and define
  \[
  K_v = \left(\prod_{u\in N(v)} Z_u \right) X_v \,.
  \]
  Then $K_v\ket G = \ket G$.
  \begin{proof}
    We apply an $X(\pi$) to the output corresponding to $v$ and a
    $Z(\pi)$ on all the outputs of the neighbours of $v$:
    \[
    \smdiag{\input{diag/gs-stab-i.tikz}}
    =
    \smdiag{\input{diag/gs-stab-ii.tikz}}
    =
    \smdiag{\input{diag/gs-stab-iii.tikz}}
    =
    \smdiag{\input{diag/gs-stab-iv.tikz}}
    \]
  \end{proof}
\end{proposition}

\begin{definition}\label{def:local-complementation}
  Suppose $G = (V,E)$ is a graph with $v\in V$.  Let $E_1 = E \cap
  (N(v)\times N(v))$ and $E_2 = (N(v)\times N(v)) \setminus E_1$.
  Then the local complementation of $G$ at $v$ is defined by 
  \[
  G*v =   (V, (E\setminus E_1) \cup E_2) \,.
  \]
  Equivalently, if $u,u'$ are neighbours of $v$ then $uu'$ is an edge
  of $G*v$ if and only if it is not an edge of $G$; otherwise the two
  graphs are the same.
\end{definition}

For graph states local complementation can also be expressed in terms
of a product of single qubit operations:

\begin{proposition}[\cite{VdN04}]\label{thm:vandennest}
  Let $G$ be a graph with vertex $v$; define
  \[
  M_v = \left( \prod_{u\in N(v)} Z(-\pi/2)_u \right) \cdot X(\pi/2)_v
  \]
  Then $\ket{G*v} = M_v\ket{G}$.
\end{proposition}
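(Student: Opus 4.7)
The plan is to express $M_v\ket{G}$ as the ZX diagram $M_v D_G$ and to rewrite it into $D_{G*v}$ using the calculus rules, with the Euler decomposition (EU) of the Hadamard as the main non-trivial ingredient. The diagram $M_v D_G$ is obtained from $D_G$ by post-composing a red $\pi/2$ spider on the output corresponding to $v$ and a green $-\pi/2$ spider on each output corresponding to a neighbour $u\in N(v)$. By the spider rule (S1), each green $-\pi/2$ immediately fuses into the green spider representing $u$, shifting its phase from $0$ to $-\pi/2$.

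Next I would apply (EU) to every Hadamard box on an edge incident to $v$, replacing each $H$ by the three-rotation sequence $Z(\pi/2) \circ X(\pi/2) \circ Z(\pi/2)$. The outer $Z(\pi/2)$'s fuse by (S1): on the neighbour side, the $\pi/2$ exactly cancels the $-\pi/2$ that $M_v$ placed there, restoring the neighbour's phase to $0$; on the $v$ side, all of them merge into $v$ as a single accumulated phase. What remains around $v$ is a green spider connected to each neighbour by a wire carrying a single red $X(\pi/2)$, together with the original $X(\pi/2)$ on its output leg.

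Finally, I would use the bialgebra rule (Bi) together with the colour-change rule (H1), the Hopf rule (Hpf), and the identity (H2) to propagate the $v$-spider across the intervening red $X(\pi/2)$ spiders. This creates new Hadamard-edges directly between each pair of neighbours of $v$, which combine with any pre-existing Hadamard-edge via (Hpf), realising exactly the edge-toggling that defines the local complementation $G \mapsto G*v$. The $v$-spider itself, together with the residual $X(\pi/2)$ on its output, should collapse to the graph-state preparation for a new $\ket{+}$ on the $v$-output, up to a global phase absorbed by the equivalence $\equiv$ on $\wfdhilb$. The hard part is this bialgebra step: one must verify that for each pair of neighbours \emph{exactly} one new Hadamard-edge is produced and that all accumulated phases collapse to a single global scalar; the remainder of the derivation is routine application of (S1)--(H2).
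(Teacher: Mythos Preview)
The paper does not prove this proposition; it is quoted from \cite{VdN04} as a known result about graph states, and the original proof there uses the stabilizer formalism (binary linear algebra on the adjacency matrix), not the ZX-calculus. What you are sketching---a derivation of $\ket{G*v}=M_v\ket{G}$ inside the \zxcalculus augmented with (EU)---is precisely one direction of Theorem~\ref{thm:lc-is-euler}, which the paper likewise imports from earlier work \cite{Duncan:2009ph} rather than proving here. So there is no proof in the paper to compare against; your route is simply a different one from the cited source.

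As a ZX derivation your outline is on the right track: decompose the Hadamards on edges incident to $v$ via (EU), cancel the $\pm\pi/2$ phases at the neighbours, and then use bialgebra-type moves to push the remaining red layer outward so that edges among $N(v)$ are toggled. However, the step you yourself flag as ``the hard part'' is where essentially all the content is, and your description does not yet discharge it. Two concrete issues: first, you have not accounted for the phase accumulated \emph{at $v$}---each of the $|N(v)|$ decomposed Hadamards deposits a $Z(\pi/2)$ on the $v$-spider, so its phase becomes $|N(v)|\cdot\pi/2$, and this must be shown to combine correctly with the external $X(\pi/2)$ to leave a plain green spider (up to a scalar). Second, a single application of (Bi) does not produce one new $H$-edge per pair of neighbours; the known ZX proofs of this direction proceed through a more delicate sequence of colour-change, generalised bialgebra, and spider/Hopf steps. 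Your sketch stops before that computation, so as it stands it is an outline rather than a proof.
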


\noindent
Note that $M_v^2 = K_v$ hence local complementation is involutive on
graph states.

\begin{example}\label{ex:2}
  Here we consider the local complementation of the  triangle by its
  top vertex:
  \[
  \smdiag{\input{diag/triangle-state.tikz}} 
  =
  \smdiag{\input{diag/lc-triangle-state.tikz}} 
  \]
\end{example}

\begin{theorem}[\cite{Duncan:2009ph}]  \label{thm:lc-is-euler}
  Proposition \ref{thm:vandennest} is equivalent to
  Equation~\eqref{eq:euler} in the \zxcalculus, hence it cannot be
  proven in the \zxcalculus.
\end{theorem}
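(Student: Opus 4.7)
My plan is to prove the equivalence (EU) $\Leftrightarrow$ Proposition~\ref{thm:vandennest} within the \zxcalculus; the non-derivability conclusion then follows immediately, since by Lemma~\ref{lem:no-euler-in-zx} the Euler equation cannot be derived from the base axioms.

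For the implication (EU) $\Rightarrow$ Proposition~\ref{thm:vandennest}, I would reduce $M_v \circ D_G$ to $D_{G*v}$ by direct diagrammatic manipulation. First, use color-change (H1) to convert each $Z(-\pi/2)$ on a neighbour $u \in N(v)$ into an $X(-\pi/2)$ surrounded by Hadamards; the outer Hadamards merge with the $H$-edge connecting $u$ to $v$ via (H2), yielding a direct green-to-red adjacency at the boundary of $v$. Next, apply (EU) to rewrite the $X(\pi/2)_v$ as three rotations of the opposite colour; one of the resulting $Z(\pi/2)$ spiders fuses via (S1) into the $v$-spider of $D_G$, leaving an $X$-spider that, by the bialgebra rule (Bi), distributes over all the red $X(-\pi/2)$ spiders now attached to the former neighbours. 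This bialgebra step creates an $H$-edge between every pair of neighbours of $v$. The Hopf law (Hpf) together with (H2) then cancels double $H$-edges between pairs that were already adjacent in $G$, producing exactly the toggled edge set of $G*v$, while the remaining local phases on each neighbour collapse by (S1).

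For the converse, Proposition~\ref{thm:vandennest} $\Rightarrow$ (EU), I would instantiate the LC equation on the smallest graph producing a non-trivial rewrite, namely the two-edge path $u - v - w$, for which $G*v$ is the triangle. The resulting equation relates two three-output diagrams that differ only by an $H$-edge between $u$ and $w$; bending the outer wires into inputs via the compact structure and fusing the boundary spiders reduces it to a single-wire identity between a Hadamard and a particular composition of $X$ and $Z$ $\pi/2$-rotations, which is (EU) up to scalar.

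The main obstacle is ensuring in the forward direction that the bialgebra step produces edges with the correct multiplicity and that all incidental phases cancel. Specifically, each pre-existing $H$-edge between two neighbours of $v$ must combine with the newly introduced one to annihilate via (H2), and the residual $Z(\pi/2)$ phases introduced by (EU) must balance against the $Z(-\pi/2)$ contributions from $M_v$ uniformly across all neighbours; getting this bookkeeping right, independently of the local graph structure on $N(v)$, is the technical heart of the argument.
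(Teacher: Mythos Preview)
The paper does not actually give a proof of this theorem: it is stated with a citation to \cite{Duncan:2009ph} and no argument is reproduced here. So there is nothing in the present paper to compare your proposal against directly.

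That said, your outline is consonant with the strategy of the cited work. The backward direction is the cleaner half and your choice of the path $u\!-\!v\!-\!w$ is exactly the kind of minimal instance that makes (EU) pop out after bending wires and fusing spiders; this is essentially how the original argument goes. For the forward direction your sketch is broadly right in spirit, but the mechanism you describe is slightly off: you do not rewrite $X(\pi/2)_v$ via (EU) directly as ``three rotations of the opposite colour''. Rather, (EU) gives $X(\pi/2)=Z(-\pi/2)\,H\,Z(-\pi/2)$ (or an equivalent rearrangement), and it is the resulting $H$ at $v$ that, after colour-changing the $v$-spider, enables the generalised bialgebra expansion over $N(v)$. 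The phase bookkeeping you flag as the ``main obstacle'' is genuine but routine once the $H$ is in place: each neighbour picks up exactly one $Z(\pi/2)$ from the expansion which cancels its $Z(-\pi/2)$ from $M_v$, and the double $H$-edges collapse by (Hpf) as you say. With that correction to the role of (EU), your plan would go through.
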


\section{Pivoting}
\label{sec:pivoting}

Pivoting, also known as edge-local complementation, is a local
transformation of graphs. Given a graph $G$ with an edge $uv$,
$G\wedge uv$, the graph obtained by pivoting according to $uv$,
consists in exchanging the two vertices $u$ and $v$ and in
complementing the tripartite subgraph formed by ($i$) the common
neighbours of $u$ and $v$; $(ii)$ the exclusive neighbours of $u$; and
($iii$) the exclusive neighbours of $v$ (see Figure \ref{fg1}).


   \begin{psfrags}
   \psfrag{u}{$u$}
\begin{figure}[h]
   \begin{center}

\includegraphics[width=8cm]{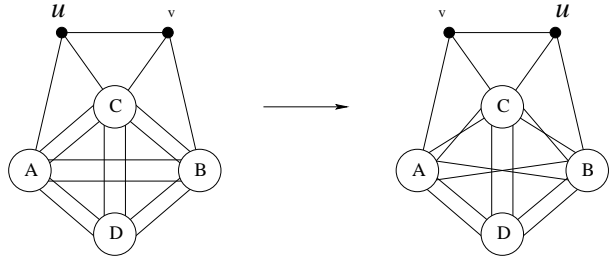}\vspace{-0.5cm}
\end{center}
\caption{\label{fg1}Pivoting on $uv$. $C={ N}(u)\cap{ N}(v)$, $A={ N}(u)\setminus C$, $B={ N}(v)\setminus C$,  and $D$ is the rest of the vertices. Pivoting on $uv$ exchanges vertices $u$ and $v$, and for any $(x,y) \in (A\times B)\cup(B\times C)\cup(A\times C)$, the edge $xy$ is deleted if $xy$ was an edge, and added otherwise.}
\end{figure}    
\end{psfrags}

Pivoting is a combination of local complementations, $G\wedge uv = G * u* v*u$ (Notice that $G*u*v*u=G*v*u*v$) and 
can be performed on graph states by applying Hadamard on vertices $u$ and $v$ and $Z$ on their common neighbours:

\begin{proposition}[\emph{Pivoting Property} \cite{nest2005edge,MP2013}]
\[
\ket {G\wedge uv}  =H_{u,v}Z_{N(u)\cap N(v)} \ket G
\]
\end{proposition}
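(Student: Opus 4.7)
My plan is to apply Van den Nest's formula (Proposition \ref{thm:vandennest}) three times, using the identity $G \wedge uv = G * u * v * u$ stated just above the proposition. This gives
\[
\ket{G \wedge uv} \;=\; M_u^{(G*u*v)}\,M_v^{(G*u)}\,M_u^{(G)}\,\ket G,
\]
where $M_w^{(G')}$ denotes the operator of Proposition \ref{thm:vandennest} computed relative to the graph $G'$. The rest of the proof is then a qubit-by-qubit simplification of this three-fold product.

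I would partition the vertex set as $\{u,v\} \sqcup A \sqcup B \sqcup C \sqcup D$ as in Figure \ref{fg1}, with $C = N(u)\cap N(v)$, $A$ and $B$ the exclusive neighborhoods, and $D$ the rest. The bookkeeping step: for each $w \in A \cup B \cup C$, determine whether $w \in N_{G*u}(v)$ and whether $w \in N_{G*u*v}(u)$, using the fact that $*w$ only toggles edges inside $N(w)$. A direct case analysis shows that every qubit in $A \cup B \cup C$ receives two $Z(-\pi/2)$ factors (so net $Z$ up to phase), qubits in $D$ receive the identity, qubit $u$ receives $X(\pi/2)\,Z(-\pi/2)\,X(\pi/2)$, and qubit $v$ receives $Z(-\pi/2)\,X(\pi/2)\,Z(-\pi/2)$. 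A short $2\times 2$ calculation---or a commutation argument using the Euler decomposition $Z(\pi/2)X(\pi/2)Z(\pi/2) \equiv H$ and the fact that $Z(\pi)=Z$---reduces both of the latter to $YH$ up to global phase.

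At this point $\ket{G \wedge uv} \equiv (YH)_u(YH)_v\,Z_{A\cup B\cup C}\,\ket G$. Left-multiplying by $H_uH_v$ and using $HYH = -Y$ turns this into $Y_uY_v\,Z_{A\cup B\cup C}\,\ket G$, which factors as $Z_C\cdot(Y_uY_v\,Z_AZ_B)\,\ket G$. Expanding $K_u = X_uZ_AZ_CZ_v$ and $K_v = X_vZ_BZ_CZ_u$, and using $XZ = -iY$ and $ZX = iY$ on qubits $u$ and $v$, the bracketed operator is exactly $K_uK_v$, which stabilises $\ket G$ by Proposition \ref{prop:stab-graph-state}. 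Thus $H_uH_v\ket{G\wedge uv} \equiv Z_C\ket G$, and applying $H_uH_v$ once more (using $H^2=I$) delivers the claim.

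The main obstacle is purely combinatorial: keeping straight which edges between $u$, $v$, and their neighbours are present in each of the intermediate graphs $G*u$ and $G*u*v$, so that the count of $Z(-\pi/2)$ factors on each qubit comes out right. Once that is done, everything else is standard single-qubit Clifford algebra together with a single stabiliser identity, and the $\wfdhilb$ convention of ignoring global phase absorbs the handful of $\pm 1$, $\pm i$ scalars thrown off along the way.
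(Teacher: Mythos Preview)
The paper does not prove this proposition; it is quoted with citations to \cite{nest2005edge,MP2013} and used as an external fact, so there is no in-paper proof to compare against.

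Your argument is correct. The neighbourhood bookkeeping comes out exactly as you describe: with $N_G(u)=A\cup C\cup\{v\}$, one finds $N_{G*u}(v)=A\cup B\cup\{u\}$ and $N_{G*u*v}(u)=B\cup C\cup\{v\}$, so each vertex of $A\cup B\cup C$ collects precisely two $Z(-\pi/2)$ factors and $D$ collects none. The single-qubit products $X(\pi/2)Z(-\pi/2)X(\pi/2)$ and $Z(-\pi/2)X(\pi/2)Z(-\pi/2)$ on $u$ and $v$ do both equal $YH$ up to phase (a direct $2\times2$ check, or conjugate the Euler identity $Z(\pi/2)X(\pi/2)Z(\pi/2)\propto H$ by $Z$ or $X$). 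Your stabiliser cancellation $Y_uY_vZ_AZ_B=K_uK_v$ is also correct, since $K_uK_v=(X_uZ_u)(Z_vX_v)Z_AZ_B$ after the two $Z_C$ factors cancel. The proof is clean and self-contained; the only thing you might tighten in a write-up is making the order of composition explicit when you list the factors on $u$ and $v$, since ``receives'' is slightly ambiguous between left-to-right and right-to-left reading.
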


Pivoting of graph states have several applications in quantum
information processing. In particular the universality of the
triangular grid as a resource of measurement-based quantum computing
has been proved using pivoting \cite{MP2013}; pivoting can also be
used to compute the minimal distance of linear codes 
\cite{EdgeLocal}.

In the rest of this section, we prove that an additional axiom, strictly weaker that the Euler decomposition of $H$, needs to be added to the \zxcalculus to prove  the pivoting property. However, when $u$ and $v$ have no common neighbours, the pivoting property can be proved in the plain \zxcalculus:

\begin{lemma} \label{lem:weak-pivot-in-ZX}
  For any graph $G=(V,E)$ and any $u,v \in V$ which have
  no common neighbour, $\ket {G\wedge uv} = H_{u,v}\ket G$ can be
  derived in the \zxcalculus.
\end{lemma}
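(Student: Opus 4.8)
The plan is to reduce the claim to a purely local statement about the two vertices $u$ and $v$ and then discharge it with the diagram manipulations already available. Write $A = N(u)\setminus N(v)$ and $B = N(v)\setminus N(u)$; by hypothesis $N(u)\cap N(v)=\emptyset$, so pivoting on $uv$ simply swaps $u$ and $v$ and toggles all edges between $A$ and $B$. In the diagram $D_G$, the vertex $u$ is a $Z$ vertex wired by Hadamard boxes to every vertex of $A$ (and to $v$), and similarly for $v$ and $B$. Since $u$ and $v$ themselves are adjacent, there is one further Hadamard box on the edge $uv$. The operation $H_{u,v}$ postcomposes Hadamard boxes onto the outputs attached to $u$ and $v$. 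So everything happens inside the subdiagram consisting of the $Z$-spiders $u$ and $v$, the $H$-edge between them, the $H$-edges to $A$ and to $B$, and the two boundary $H$-boxes; the rest of $D_G$ is untouched, and the claimed graph identity is exactly the statement that this subdiagram equals the corresponding subdiagram of $D_{G\wedge uv}$ (with $u,v$ relabelled and the $A$–$B$ edges toggled).

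First I would use (H2) to cancel the external Hadamard box against the internal $H$-edge joining $u$ to $v$: the composite $H$–$Z(0)$–$H$ on the $u$ side, read together with the edge $uv$, collapses so that $u$ and $v$ become directly connected. More precisely, pushing the boundary Hadamards through using (H1) turns the green $Z$-spiders $u$ and $v$ into red $X$-spiders whose boundary legs are now bare wires, while the $H$-edges to $A$ and $B$ remain; and the single $H$ on the $uv$ edge is consumed, leaving $u$ and $v$ joined by a plain wire. Now I have a red $X$-spider $u$ connected directly to a red $X$-spider $v$, with $u$ also carrying $|A|$ Hadamard-edges to the $A$-vertices and $v$ carrying $|B|$ Hadamard-edges to the $B$-vertices. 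Fusing the two red spiders by the spider law (S1)–(S3) merges them into a single red $X$-spider carrying all $|A|+|B|$ Hadamard-edges to $A\cup B$.

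Next I would split that merged spider apart along the $A/B$ partition in the opposite way. Introduce a green $Z$-spider copy using (S1) run backwards to produce a bare wire between two red spiders, then apply (H1) (or the colour-swapped bialgebra reasoning of Proposition~\ref{prop:basicprops}) to re-expand: the point is that the single red spider with Hadamard-edges to all of $A\cup B$ is equal to a diagram with a $Z$-vertex "$v$" Hadamard-connected to $A$ and a $Z$-vertex "$u$" Hadamard-connected to $B$, these two being joined by a Hadamard edge — which is precisely $D_{G'}$ for the graph $G'$ in which the roles of $A$ and $B$ (equivalently of $u$ and $v$) have been exchanged. Since $N(u)\cap N(v)=\emptyset$ the "toggle the $A$–$B$ edges" part of the pivot definition is vacuous for the common-neighbour block, so $G' = G\wedge uv$, and the identity follows. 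A cleaner route for this last re-splitting is to appeal directly to the colour-symmetry bullet of Proposition~\ref{prop:basicprops} together with the observation that both sides, once all internal Hadamards are pushed to the boundary, are bare bipartite graphs in normal form differing only by the relabelling $u\leftrightarrow v$; by the normal-form content of the spider rule (Remark~\ref{rem:diff-generators}) two such diagrams are equal iff the underlying graphs are isomorphic, which here they are.

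The main obstacle I anticipate is bookkeeping the Hadamard boxes carefully: (H1) is stated only for a spider with one distinguished output, so applying it to $u$ — which has many legs — requires first fusing all the $A$-edges through (S1) into that single leg, pushing the Hadamard, and then un-fusing, and one must check this does not accidentally change the parity of $H$-boxes on the $A$–$B$ connecting edges (it should not, since those edges never touch the boundary legs being manipulated). The delicate case is when $A$ or $B$ is empty, or when $u$ or $v$ has degree one so that the $uv$ edge is its only edge; these degenerate configurations should be handled first as base cases, using (H2) and (S3) directly, before treating the generic case by the fuse–push–unfuse argument above.
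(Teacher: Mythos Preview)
There is a genuine gap, and it stems from a misapplication of (H1). Pushing the boundary Hadamard through the green spider $u$ turns it red \emph{and adds an $H$ on every other leg}; since those legs already carry the $H$-boxes of the graph-state encoding, the pairs cancel by (H2). So after the push the edges from $u$ to the vertices of $A$ are \emph{bare} wires, not Hadamard-edges as you claim. At this point $u$ is a red spider joined by plain wires to the green spiders of $A$ and to the green spider $v$: this is precisely a bialgebra configuration, not a spider-fusion configuration.

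This matters because your fuse-then-resplit manoeuvre, even if it could be carried out, would only reattach the same bundle of legs in a different grouping and could never create or destroy edges between $A$ and $B$. But pivoting with $N(u)\cap N(v)=\emptyset$ still complements the entire $A\times B$ block (only the $A\times C$ and $B\times C$ blocks are vacuous when $C=\emptyset$), and your argument has no mechanism for that. The paper supplies exactly this mechanism: after the colour change, the \emph{generalised bialgebra law} is applied to the red--green bipartite pattern, which blows it up into a complete bipartite graph between the $A$-side and the $B$-side; a second use of (H1) and the spider rule then restore the graph-state shape, and the Hopf law removes any resulting double edges, yielding the required complementation of the $A$--$B$ bipartite subgraph. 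Your proposal is missing precisely this bialgebra step.
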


\begin{proof}
The proof is based on the  generalised bialgebra law \cite{Duncan:2009ph}:
\[ 
\vcenter{\hbox{\input{diag/k34.tex}}} 
= 
\vcenter{\hbox{\input{diag/e34.tex}}} 
\] 
Assume for the moment that there is no edge between the neighbours of
$u$ and the neighbours of $v$. In that case applying $H$ on both $u$ and
$v$ permutes $u$ and $v$ and creates a complete bipartite graph
between the neighbours of $u$ and this of $v$. For instance, if $u$
and $v$ both have two neighbours:
\[
\vcenter{\hbox{\input{diag/proof_piv_nocom.tikz}}}
=~\vcenter{\hbox{\input{diag/proof_piv_nocom-2.tikz}}}
=~\vcenter{\hbox{\input{diag/proof_piv_nocom-4.tikz}}} 
=~\vcenter{\hbox{\input{diag/proof_piv_nocom-5.tikz}}}  
\]
The first equation is via rule (H1) while the second follows from the
generalised bialgebra.  The final equation uses rule (H1) again to
remove all the red vertices, followed by the spider rule, and some
rearrangement of the graph.
Now suppose that in fact there were some edges between the neighbours
of $u$ and those of $v$.  The procedure above will add an additional edge,
and then both may be removed since
$\vcenter{\hbox{\input{diag/hopf-h-l.tikz}}}=\vcenter{\hbox{\begin{tikzpicture}
	\begin{pgfonlayer}{nodelayer}
		\node [style=greendot] (0) at (-1, 2) {};
		\node [style=empty] (1) at (-1.25, 2) {};
		\node [style=empty] (2) at (-0.25, 2) {};
		\node [style=greendot] (3) at (-0.5, 2) {};
	\end{pgfonlayer}
	\begin{pgfonlayer}{edgelayer}
		\draw (0) to (1);
		\draw (3) to (2);
	\end{pgfonlayer}
\end{tikzpicture}}}$
by (HpF).
\end{proof}

As a consequence, pivoting of triangle-free graphs or bipartite (or
2-colourable) graphs can be derived in the \zxcalculus. Notice that the
pivoting preserves bipartiteness (but not triangle freeness), so one
can prove a series of pivotings on bipartite graphs in the
\zxcalculus.

In the following we prove that the pivoting of arbitrary graph can be
derived in the \zxcalculus augmented with a new rule for
$\pi$-rotations:


\begin{theorem} \label{thm:Hloopsufficient} 
  Pivoting of arbitrary graph can be proved in the \zxcalculus
  augmented with the following axiom:
  \begin{equation}\label{hloop}\tag{HL}
    \vcenter{\hbox{\begin{tikzpicture}
	\begin{pgfonlayer}{nodelayer}
		\node [style=empty] (0) at (-0.5, 2.5) {};
		\node [style=empty] (1) at (-0.5, 1.5) {};
		\node [style=greendot] (2) at (-0.5, 2) {$\pi$};
	\end{pgfonlayer}
	\begin{pgfonlayer}{edgelayer}
		\draw (2) to (1);
		\draw (2) to (0);
	\end{pgfonlayer}
\end{tikzpicture}}} 
    = 
    \vcenter{\hbox{\input{diag/new-rule.tikz}}} 
  \end{equation}
\end{theorem}

\noindent This new axiom is called the \emph{H-loop} axiom as it can be rewritten as $\vcenter{\hbox{}} 
= 
\vcenter{\hbox{\input{diag/H-loop.tikz}}}$.

\begin{proof} We illustrate the proof on the particular case where the $u$ and $v$ have two common neighbours:
\[ 
\vcenter{\hbox{\input{diag/triangle_pivot-0.tikz}}} 
= 
\vcenter{\hbox{\input{diag/triangle_pivot-4.tikz}}} 
= 
\vcenter{\hbox{\input{diag/triangle_pivot-5.tikz}}} 
= 
\vcenter{\hbox{\input{diag/triangle_pivot-7.tikz}}} = 
\vcenter{\hbox{\input{diag/triangle_pivot-6.tikz}}} 
\] 
The left-most diagram corresponds to a graph state on which $H$ is
applied on two vertices  $u$ and $v$ and $Z$ (green
$\pi$-rotation) on their two common neighbours. The second diagram is
obtained using the (\ref{hloop}) axiom. This transformation splits each
common neighbour of $u$ and $v$ in such a way that
Lemma~\ref{lem:weak-pivot-in-ZX} can be applied, leading to
the third diagram. The application of the spider rule (fourth diagram)
and the Hopf law (fifth diagram) completes the proof for this
particular graph.


The general case is similar.  First, the $\pi$-rotations on the common
neighbours are removed using the (\ref{hloop}) axiom, which splits the common
neighbours. Then, in the absence of common neighbours
Lemma~\ref{lem:weak-pivot-in-ZX}  is
used. Finally spiders and the Hopf Law complete the proof. 
\end{proof}

In the following, we show that one can derive pivoting if and only if Equation~\eqref{hloop} holds:

\begin{lemma}\label{lem:pivot-implies-loop}
In the \zxcalculus, the pivoting property for the triangle implies that the $\pi$-rotation is equivalent to a ``$H$-loop'', i.e.

\[ 
\vcenter{\hbox{\input{diag/tri_pivot-0.tikz}}} 
= 
\vcenter{\hbox{\input{diag/tri_pivot-2.tikz}}} 
~~~~\Rightarrow~~~~~~
\vcenter{\hbox{}} 
= 
\vcenter{\hbox{\input{diag/H-loop.tikz}}} 
\] \end{lemma}

\begin{proof}$\vcenter{\hbox{\input{diag/H-loop.tikz}}} =\vcenter{\hbox{\input{diag/new-rule.tikz}}}  = \vcenter{\hbox{\input{diag/tri_pivot-3b.tikz}}} = \vcenter{\hbox{\input{diag/tri_pivot-2b.tikz}}} = \vcenter{\hbox{\input{diag/tri_pivot-0b.tikz}}}= \vcenter{\hbox{\input{diag/tri_pivot-4b.tikz}}}= \vcenter{\hbox{\input{diag/tri_pivot-5b.tikz}}}=\vcenter{\hbox{}} $

%

\end{proof}

\begin{lemma}\label{lem:no-pivot-in-zx}
$\vcenter{\hbox{}} 
= 
\vcenter{\hbox{\input{diag/H-loop.tikz}}}$ cannot be derived from the rules of the \zxcalculus. 
\end{lemma}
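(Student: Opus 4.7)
The plan is to reuse the counter-model strategy of Lemma \ref{lem:no-euler-in-zx}: produce an interpretation of diagrams that validates every axiom of Figure \ref{fig:eqns} yet falsifies \eqref{hloop}, and conclude that \eqref{hloop} cannot be a consequence of the plain \zxcalculus. In fact, the very same functor $\denote{\cdot}_0$ should do the job, so nothing new has to be verified on the axiom side.

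Recall that $\denote{\cdot}_0$ is defined on the generators by $\denote{Z^n_m(\alpha)}_0 = \denote{Z^n_m(0)}$, $\denote{X^n_m(\alpha)}_0 = \denote{X^n_m(0)}$, and $\denote{H}_0 = \denote{H}$. Lemma \ref{lem:no-euler-in-zx} already establishes that every rule of Figure \ref{fig:eqns} is sound under $\denote{\cdot}_0$, so we get a well-defined functor $\mathbb{D} \to \wfdhilb$. The only thing left is to evaluate the two sides of \eqref{hloop} under $\denote{\cdot}_0$.

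The left-hand side is $Z(\pi)$; by definition $\denote{Z^1_1(\pi)}_0 = \denote{Z^1_1(0)} = I_{\mathbb C^2}$. The right-hand side carries no angle labels, so $\denote{\cdot}_0$ agrees there with the usual $\denote{\cdot}$. Reading the diagram as the composite $\denote{Z^2_1(0)} \circ (\denote{H} \otimes I) \circ \denote{Z^1_2(0)}$ and using $\denote{Z^1_2(0)}|a\rangle = |aa\rangle$ and $\denote{Z^2_1(0)}|bc\rangle = \delta_{bc}|b\rangle$, one gets $|a\rangle \mapsto H_{aa}|a\rangle = \tfrac{(-1)^a}{\sqrt{2}}|a\rangle$, i.e.\ the map $\tfrac{1}{\sqrt{2}}\, Z$. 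Since $I$ and $Z$ are linearly independent, no nonzero scalar can identify them, hence they are inequivalent in $\wfdhilb$ under any reasonable scalar convention. Thus \eqref{hloop} fails in the model $\denote{\cdot}_0$ and is therefore not derivable.

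There is no serious obstacle; the only step requiring any attention is the small tensor-network computation for the right-hand side, which is a direct evaluation on the two computational basis vectors. Conceptually the point is exactly the one exploited in Lemma \ref{lem:no-euler-in-zx}: zeroing all phases while keeping $H$ intact preserves the axioms of the \zxcalculus but breaks any equation that tries to identify an $H$-containing subdiagram with a purely phase-based one, which is precisely the content of \eqref{hloop}.
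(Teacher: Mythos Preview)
Your argument is correct and is exactly the approach the paper takes: reuse the phase-erasing functor $\denote{\cdot}_0$ from Lemma~\ref{lem:no-euler-in-zx}, observe that it validates all the axioms, and check that it sends the two sides of~\eqref{hloop} to the identity and to a nonzero scalar multiple of $Z$, which are inequivalent in $\wfdhilb$. The only cosmetic difference is the scalar in front of $Z$ (you get $\tfrac{1}{\sqrt{2}}$, the paper writes $\tfrac{1}{2}$), which is immaterial to the conclusion.
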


\begin{proof}
 We consider the interpretation functor $\denote ._0$ introduced in
 Lemma~\ref{lem:no-euler-in-zx}, which preserves all the axioms of the
 \zxcalculus, but for which we have:
 \[
 \frac12 
 \begin{pmatrix*}[r]
   1&0\\0&-1
 \end{pmatrix*}
\quad  = \quad 
 \denote{\vcenter{\hbox{\input{diag/H-loop.tikz}}}}_0
\quad  \neq  \quad 
 \denote{\vcenter{\hbox{}} }_0 
\quad  =  \quad 
 \begin{pmatrix*}[r]
   1&0\\0&1  
 \end{pmatrix*}
 \]
\end{proof}
Like the Euler decomposition of $H$ (Equation \eqref{eq:euler}),
Equation~\eqref{hloop} cannot be derived from the rules of the
\zxcalculus. The completeness for the stabilisers of the \zxcalculus
augmented with Euler decomposition of $H$ guarantees that equation
\eqref{hloop} can be derived from the Euler decomposition of
$H$. Indeed,
\[
\vcenter{\hbox{\input{diag/H-loop.tikz}}}
\, = \, \vcenter{\hbox{\input{diag/H-loop-Euler.tikz}}}
\, = \, \vcenter{\hbox{\input{diag/H-loop-Euler-2.tikz}}}
\, = \, \vcenter{\hbox{\input{diag/H-loop-Euler-3.tikz}}}
\, = \, \vcenter{\hbox{}}.
\] 

In the following, we prove that Equation~\eqref{hloop} is actually strictly weaker than the Euler decomposition in the sense that the Euler decomposition cannot be derived from Equation~\eqref{hloop} in the \zxcalculus. 

\begin{lemma}\label{lem:no-euler-in-pivot}
The Euler decomposition of $H$ cannot be derived  in the \zxcalculus augmented with the axiom $\vcenter{\hbox{}} 
= 
\vcenter{\hbox{\input{diag/H-loop.tikz}}}$.
\end{lemma}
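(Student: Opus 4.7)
The plan is, in the spirit of Lemmas~\ref{lem:no-euler-in-zx} and~\ref{lem:no-pivot-in-zx}, to exhibit an alternative interpretation functor $\denote{\cdot}_\star$ on $\catD$ that is sound for every axiom of the \zxcalculus together with \eqref{hloop} yet sends the two sides of the Euler decomposition to inequivalent morphisms. By the standard soundness argument, the existence of such a model forbids any derivation of Euler from \zxcalculus+\eqref{hloop}. Crucially, the phase-killing functor $\denote{\cdot}_0$ used in Lemma~\ref{lem:no-euler-in-zx} is no longer available, since Lemma~\ref{lem:no-pivot-in-zx} established that \eqref{hloop} already fails in $\denote{\cdot}_0$.

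The first natural attempt---rescaling phases by a group homomorphism $\phi\colon\mathbb{R}/2\pi\mathbb{Z}\to\mathbb{R}/2\pi\mathbb{Z}$ applied uniformly to the $Z$- and $X$-spiders, with $\denote{H}$ unchanged---cannot succeed. The spider rule (S1) forces $\phi$ to be additive; soundness of \eqref{hloop} forces $\phi(\pi)=\pi$; and $2\phi(\pi/2)=\phi(\pi)=\pi$ then constrains $\phi(\pi/2)\in\{\pi/2,-\pi/2\}$. But both identities $H=Z(\pi/2)X(\pi/2)Z(\pi/2)$ and $H=Z(-\pi/2)X(-\pi/2)Z(-\pi/2)$ hold up to global phase in the standard semantics, so Euler survives either choice of $\phi$. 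The counter-model must therefore either depart from $\wfdhilb$ or also alter the interpretation of $H$ itself.

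My proposal is to target a product category such as $\wfdhilb\times\mathcal{C}$, carrying the standard interpretation on the first factor and a carefully engineered coarsening on the second, arranged so that \eqref{hloop} becomes a tautology on $\mathcal{C}$ while the two sides of the Euler decomposition land at inequivalent values there. The main obstacle is constructing the second factor consistently: all of (S1)--(S3), ($\pi$), (C), (H1), (Hpf), (Bi), (H2), and \eqref{hloop} must remain sound under $\denote{\cdot}_\star$, and in particular (H1) tightly couples the interpretation of $H$ to that of the coloured spiders, which is precisely what blocks the easiest modifications---for instance, collapsing $H$ to the identity satisfies \eqref{hloop} trivially but immediately violates (H1). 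Once a consistent coarsening has been specified, proving the failure of Euler on it reduces to an explicit matrix computation of $\denote{H}_\star$ and $\denote{Z(\pi/2)X(\pi/2)Z(\pi/2)}_\star$ on the second factor, entirely analogous to the scalar computation that concluded Lemma~\ref{lem:no-pivot-in-zx}.
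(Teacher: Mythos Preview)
Your high-level strategy is right---build an interpretation that validates every axiom plus \eqref{hloop} but falsifies \eqref{eq:euler}---and your analysis of why the obvious candidates fail is correct and useful. But the proposal has a genuine gap: you never actually construct the counter-model. You end with ``once a consistent coarsening has been specified'' and leave $\mathcal{C}$ and the interpretation on it unspecified. That construction is precisely the content of the proof; everything else is boilerplate. The obstacle you name---that (H1) rigidly couples $H$ to the spiders, so you cannot simply collapse $H$---is exactly the difficulty that has to be overcome, not merely observed.

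The paper resolves it with a concrete and rather different idea from your product-category sketch: it defines a functor $\denote{\cdot}^\flat$ from diagrams to diagrams (not to Hilbert spaces) that \emph{doubles} every wire. A $Z$-spider is sent to a $Z$-spider on the first copy tensored with an $X$-spider on the second; an $X$-spider to the colour-swapped pair; all phases are erased; and $H$ is sent to the \emph{swap} of the two copies. The axioms not involving $H$ survive because doubling with colour-swap preserves them once phases are trivialised; (H2) holds because swap is an involution; (H1) holds because swap exchanges the two colour copies; and \eqref{hloop} reduces, after unpacking the swap, to an instance of (Hpf). Euler then fails trivially: the three $\pi/2$ rotations collapse to the identity on two wires, while $H$ is the swap, and these are inequivalent (compose with the standard $\denote{\cdot}$ if you want matrices). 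The key move you were missing is to change the interpretation of $H$ to something that is still involutive and still intertwines the two colours---the swap on a doubled carrier does exactly that---rather than trying to keep $H$ fixed and tweak only the spiders or the target category.
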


\begin{proof}We consider the following functor $\denote.^\flat$ which maps diagrams to diagrams: 
\[
\begin{array}{ccccccc}
\ldenote{~~\vcenter{\hbox{\begin{tikzpicture}
	\begin{pgfonlayer}{nodelayer}
		\node [style=empty] (0) at (-1, 2) {};
		\node [style=empty] (1) at (-1, 1) {};
	\end{pgfonlayer}
	\begin{pgfonlayer}{edgelayer}
		\draw (0) to (1);
	\end{pgfonlayer}
\end{tikzpicture}}}~ }^\flat& = &~~~\vcenter{\hbox{}}~~~\vcenter{\hbox{}}&~~~~~~~~~~~~~&\ldenote{~\vcenter{\hbox{}} }^\flat& = &~~~\vcenter{\hbox{\input{diag/swap.tikz}}}\\\\
\ldenote{~\vcenter{\hbox{}} }^\flat& = &~~~\vcenter{\hbox{}}~~\vcenter{\hbox{\begin{tikzpicture}
	\begin{pgfonlayer}{nodelayer}
		\node [style=empty] (0) at (-1, 2) {};
		\node [style=reddot] (1) at (-1, 1.5) {};
	\end{pgfonlayer}
	\begin{pgfonlayer}{edgelayer}
		\draw (0) to (1);
	\end{pgfonlayer}
\end{tikzpicture}}}&~~~~~~~~~~~~~&\ldenote{~\vcenter{\hbox{}} }^\flat& = &~~~\vcenter{\hbox{}}~~\vcenter{\hbox{}}\\\\
\ldenote{~\vcenter{\hbox{\input{diag/greendelta.tikz}}} }^\flat& = &~~~\vcenter{\hbox{\input{diag/greenreddelta.tikz}}}&~~~~~~~~~~~~~&\ldenote{~\vcenter{\hbox{\input{diag/reddelta.tikz}}} }^\flat& = &~~~\vcenter{\hbox{\input{diag/redgreendelta.tikz}}}\\\\
\ldenote{~\vcenter{\hbox{}} }^\flat& = &~~~\vcenter{\hbox{}}~~~\vcenter{\hbox{}}&~~~~~~~~~~~~~&\ldenote{~\vcenter{\hbox{\begin{tikzpicture}
	\begin{pgfonlayer}{nodelayer}
		\node [style=empty] (0) at (-0.5, 2.5) {};
		\node [style=empty] (1) at (-0.5, 1.5) {};
		\node [style=reddot] (2) at (-0.5, 2) {$\alpha$};
	\end{pgfonlayer}
	\begin{pgfonlayer}{edgelayer}
		\draw (2) to (1);
		\draw (2) to (0);
	\end{pgfonlayer}
\end{tikzpicture}}} }^\flat& = &~~~\vcenter{\hbox{}}~~~\vcenter{\hbox{}}
\end{array}
\]
Notice that the axioms of the \zxcalculus are satisfied. Indeed, for
diagrams without $H$, the functor $\denote.^\flat$ consists in
doubling the picture and trivialising the rotations.  Regarding the
axioms which involve $H$, we have 
\[
\ldenote{~~\vcenter{\hbox{\begin{tikzpicture}
	\begin{pgfonlayer}{nodelayer}
		\node [style=empty] (0) at (-1, 2.25) {};
		\node [style=empty] (1) at (-1, 1.25) {};
		\node [style=box] (2) at (-1, 1.5) {};
		\node [style=box] (3) at (-1, 2) {};
	\end{pgfonlayer}
	\begin{pgfonlayer}{edgelayer}
		\draw (2) to (1);
		\draw (3) to (0);
		\draw (2) to (3);
	\end{pgfonlayer}
\end{tikzpicture}}}~ }^\flat
={~\vcenter{\hbox{\input{diag/2swap.tikz}}} } 
=\vcenter{\hbox{}}~~~\vcenter{\hbox{}}=\ldenote{~~\vcenter{\hbox{}}~}^\flat 
\quad \text{ and } \quad 
\ldenote{~~\vcenter{\hbox{\begin{tikzpicture}
	\begin{pgfonlayer}{nodelayer}
		\node [style=empty] (0) at (-1, 2) {};
		\node [style=box] (1) at (-1, 1.75) {};
		\node [style=reddot] (2) at (-1, 1.25) {};
	\end{pgfonlayer}
	\begin{pgfonlayer}{edgelayer}
		\draw (0) to (1);
		\draw (1) to (2);
	\end{pgfonlayer}
\end{tikzpicture}}}~}^\flat 
={~\vcenter{\hbox{\input{diag/swapreddot.tikz}}} }
=~~~\vcenter{\hbox{}}~~\vcenter{\hbox{}}
=\ldenote{~\vcenter{\hbox{}}}^\flat
\] 
for instance, the other ones are satisfied similarly.

The H-loop axiom is satisfied as well:  
\[
\ldenote{\vcenter{\hbox{\input{diag/new-rule.tikz}}}}^\flat 
\, = \, \vcenter{\hbox{\input{diag/double-H-loop.tikz}}} 
\, = \, \vcenter{\hbox{\input{diag/double-H-loop-2.tikz}}} 
\, = \, \vcenter{\hbox{}}~~\vcenter{\hbox{}}
\, = \, \ldenote{\vcenter{\hbox{}}}^\flat\,,
\] 
but the Euler decomposition is not:
\[
 \ldenote{\vcenter{\hbox{\input{diag/axiomEuler-r.tex}}} }^\flat 
\, = \,
 \vcenter{\hbox{}}~~\vcenter{\hbox{}}
\quad\text{ and } \quad
\ldenote{~\vcenter{\hbox{}} }^\flat
\, = \,
\vcenter{\hbox{\input{diag/swap.tikz}}}\,.
\] 
\end{proof}

%

The combination of Lemmas~\ref{lem:no-pivot-in-zx} and
\ref{lem:no-euler-in-pivot} proves that ``\zxcalculus + H-loop'' is
indeed an intermediate theory between the \zxcalculus and
``\zxcalculus + Euler''.

\section{Angle-free calculus for Real Stabilizers}
\label{sec:angle-free-calculus}

Backens \cite{Backens:2012fk} considered a syntactic
restriction on the terms of the \zxcalculus:  by demanding that all
the phases occurring in a term are multiples of $\frac{\pi}{2}$, the
resulting \zxcalculus terms are in exact correspondence with stabilizer
states.  Furthermore, the theory of \zxcalculus + Euler is
sufficient to decide the equality for these states.  In other words,
the theory is complete for stabilizer quantum mechanics.

We will now consider a stronger syntactic restriction, namely that all
phases must be either 0 or $\pi$.  Semantically this yields the
real-valued fragment of stabilizer quantum mechanics.  We will also
modify the axiom scheme by dropping the axioms ($\pi$) and (C) and
replacing them with 
\[
\begin{array}{rclcrcl}
\RULEpiBIScopyinglhs &=& \RULEpiBIScopyingrhs
&\qquad\qquad&
\RULEpicopyinglhs &=& \RULEpicopyingrhs \\
& \text{(C1)} &&&& \text{(C2)} 
\end{array}
\]
Note that these equations are both derivable in the full \zxcalculus.
The resulting system we call the weak \zxcalculus.

\begin{lemma}  \label{lem:weak-pi-commutes}
The following equations are derivable in the weak \zxcalculus:
\[
\EQNweakpihomomorphismlhs = \EQNweakpihomomorphismrhs
\qquad\qquad\quad
\EQNweakpicommutelhs = \EQNweakpicommuterhs
\]
\begin{proof}
  For the first equation we have
\[
\EQNweakpihomomorphismlhs 
= \,\PFEweakpihomomorphismi
= \,\PFEweakpihomomorphismii
= \,\PFEweakpihomomorphismiii
= \,\EQNweakpihomomorphismrhs 
\]
by spider, (B), (C2), and spider.  Making use of this equation, in
both its original and colour-switched form,  we
have:
\[
\EQNweakpicommutelhs
= \, \PFweakpicommuterhsi
= \, \PFweakpicommuterhsii
= \, \PFweakpicommuterhsiii
= \, \PFweakpicommuterhsiv
= \, \EQNweakpicommuterhs
\]
where the scalar factor was dropped at the last step.
\end{proof}
\end{lemma}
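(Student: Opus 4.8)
The plan is to re-derive, inside the weak calculus, the two $\pi$-commutation facts that the now-discarded ($\pi$) axiom would hand us directly: that an $X(\pi)$ copies through a green node, and that $X(\pi)$ and $Z(\pi)$ commute up to a scalar. The only non-trivial tools available are the copy law (C2) and the bialgebra law (Bi), and the obstacle is that (C2) fires only when a green node carries a \emph{one-legged} red $\pi$-node, not when an $X(\pi)$ sits on an ordinary wire. So the device used throughout is: run the spider rule backwards to rewrite the two-legged node $X^1_1(\pi)$ as a phase-$0$ red spider with an extra leg capped by a one-legged red $\pi$-node $X^0_1(\pi)$ (and, dually, for $Z(\pi)$). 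This manufactures exactly the configurations that (Bi) and (C2) can act on, and the spider rule forwards reabsorbs those extra nodes at the end.

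For the first equation ($X(\pi)$ on the input leg of $Z^1_2(0)$ equals $X(\pi)$ on each of its two output legs), I would proceed in four steps. (i) Use the spider rule to replace the input $X(\pi)$ by a phase-$0$ red spider, one of whose three legs is a one-legged red $\pi$-node; the diagram is now a red spider sitting directly above the green spider, which is precisely the right-hand side of the bialgebra law (Bi) (with a red $\pi$-node pendant on one of the red spider's legs). (ii) Apply (Bi) in reverse, blowing this up into the complete bipartite pattern of two green and two red spiders, the pendant red $\pi$-node now hanging off one of the green spiders. (iii) That green spider carries a one-legged red $\pi$-node, so (C2) collapses it and deposits a red $\pi$-node on each of its two red neighbours. (iv) The spider rule absorbs each of those into the adjacent red spider, producing an $X(\pi)$ node there; the result is the green $2$-fan with an $X(\pi)$ on each output, which is the desired right-hand side.

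For the second equation ($Z(\pi)$ stacked over $X(\pi)$ equals $X(\pi)$ stacked over $Z(\pi)$, up to a scalar), I would build on the first. Expand the lower $X(\pi)$ by the spider rule into a phase-$0$ red spider with a pendant red $\pi$-node, so the upper $Z(\pi)$ now sits on the single leg of a red ``$2$-fan''. Applying the colour-switched form of the first equation pushes that $Z(\pi)$ onto the fan's other two legs --- one on the outgoing wire, one on the branch to the pendant node. Now expand this second $Z(\pi)$ (spider rule) into a phase-$0$ green spider with an extra pendant green $\pi$-node; this green spider carries a red $\pi$-node, so (C2) collapses it, putting a red $\pi$-node on the wire back to the main red spider and closing the green $\pi$-node off into a disconnected scalar. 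Finally the spider rule reabsorbs that new red $\pi$-node into the main red spider, yielding an $X(\pi)$ now above the $Z(\pi)$; discarding the scalar subdiagram gives the right-hand side.

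I do not anticipate any real difficulty: the content is carried entirely by (Bi) and (C2), with everything else being spider fusion. The work is bookkeeping --- verifying that after each rewrite the subdiagram matched by (Bi) or (C2) genuinely appears, keeping track of the legs through the bipartite explosion, and, most importantly, checking that the derivation never appeals to the dropped ($\pi$) or (C) axioms and never introduces a phase outside $\{0,\pi\}$, so that it is legitimately a derivation in the weak calculus.
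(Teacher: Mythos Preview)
Your proposal is correct and follows essentially the same route as the paper's own proof: for the first equation you use spider (to expose a red $\pi$-pendant), bialgebra in reverse, (C2), and spider again; for the second you expand the lower $X(\pi)$ by spider, apply the colour-switched first equation, expand the resulting $Z(\pi)$ by spider, fire (C2) to produce a red $\pi$-pendant plus a disconnected scalar, and reabsorb by spider --- exactly the sequence the paper gives. The only cosmetic difference is that the paper annotates its steps as ``spider, (B), (C2), spider'' without spelling out the reverse direction of the bialgebra law, which you make explicit.
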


However, in the presence of Equation~\eqref{hloop} there is no need
for the angle $\pi$ at all. We can now define the ``angle-free
\zxcalculus'' by replacing all the $\pi$ vertices with loops:
\[
\vcenter{\hbox{}}
\, \mapsto \,
\vcenter{\hbox{\input{diag/H-loop.tikz}}}
\qquad\text{ and }\qquad
\vcenter{\hbox{\begin{tikzpicture}
	\begin{pgfonlayer}{nodelayer}
		\node [style=empty] (0) at (-0.5, 2.5) {};
		\node [style=empty] (1) at (-0.5, 1.5) {};
		\node [style=reddot] (2) at (-0.5, 2) {$\pi$};
	\end{pgfonlayer}
	\begin{pgfonlayer}{edgelayer}
		\draw (2) to (1);
		\draw (2) to (0);
	\end{pgfonlayer}
\end{tikzpicture}}}
\, \mapsto \,
\vcenter{\hbox{\input{diag/red-H-loop.tikz}}}
\]
and replacing axiom (C2) with (L):

\begin{equation}\label{eq:loop-copy}\tag{L}
\vcenter{\hbox{\input{diag/loop-copy-lhs.tikz}}}
\, = \,
\vcenter{\hbox{\input{diag/loop-copy-rhs.tikz}}}
\end{equation}

Evidently, the resulting calculus is strictly stronger than the
weak \zxcalculus and weaker than the restricted \zxcalculus + Euler
considered by Backens.

We will show that the angle-free \zxcalculus is complete for
real-valued stabilizers.

\subsection{Real stabilizer quantum mechanics}

Recall that the Clifford operations are the normalisers of the Pauli
operators, i.e. $C_n = \{U|\forall g \in P_n, UgU^\dagger \in P_n\}$
where $P_n$ is the Pauli group on $n$ qubits.  The \emph{real}
Cliffords---i.e. those satisfying $\overline U = U$---form a subgroup
of $C_n$ generated by $\{Z,H, \CZ\}$. We call \emph{real stabilizer
  quantum mechanics} any quantum evolution that can be described by
real Clifford operations, $\ket 0$ initialisations,
and $\ket0$ projections. Notice that the image of the
angle-free \zxcalculus under the functor $\denote \cdot$ coincides with
real stabiliser quantum mechanics. We now show the
completeness of the angle-free \zxcalculus for real stabiliser
quantum mechanics, i.e. for any two diagrams $D_1$ and $D_2$, if
$\denote{D_1} = \denote{D_2}$ then $D_1=D_2$ can be proved in the
calculus.

We follow the proof of the completeness of
the \zxcalculus together with the Euler decomposition for 
(complex) stabiliser quantum mechanics \cite{Backens:2012fk}. Due to the Choi-Jamoilkowski isomorphism it suffices to
consider input-free diagrams (since any input can be turned into an
output). A diagram with no input is called a \emph{diagram state}.

\begin{definition}
  A diagram is called a \emph{GS-RLC diagram} if it consists of a
  graph state with arbitrary single real Clifford operator applied on
  each output.
\end{definition}

\begin{lemma}\label{lem:all-diags-are-gs-rlc}
  Any angle-free diagram state is equal to some GS-RLC diagram within
  the angle-free \zxcalculus.
\end{lemma}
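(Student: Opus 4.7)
The plan is to adapt the normal-form argument of Backens \cite{Backens:2012fk} for the full stabiliser \zxcalculus, replacing each appeal to local complementation (which would require the Euler decomposition, and is thus unavailable) by an appeal to pivoting, which by Theorem \ref{thm:Hloopsufficient} is derivable in the angle-free calculus from the H-loop axiom \eqref{hloop}.

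First I would reduce an arbitrary angle-free diagram state to a \emph{graph-like} intermediate form by purely syntactic bookkeeping. Iterated application of colour-change (H1), the spider rule (S1)--(S3), and (H2) puts the diagram into a shape where (i) every internal vertex is a green $Z$-spider, (ii) every edge between internal vertices either collapses via the spider rule or passes through exactly one Hadamard box, and (iii) each output is attached to a green vertex carrying a local "decoration" built from loops and Hadamards. The Hopf law (Hpf) can be used to delete multi-edges, ensuring the Hadamard-mediated edges define a simple graph on the internal vertices.

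Second, I would put the local decoration at each output into a canonical form representing a single-qubit real Clifford. The real single-qubit Clifford group (modulo global phase) is finite and generated by $Z$ and $H$; in the angle-free calculus $Z$ is realised by a Hadamard-loop via \eqref{hloop}. Thus any composition of loops and Hadamard boxes on a wire can, using the spider rule, \eqref{hloop}, (H2), and Lemma \ref{lem:weak-pi-commutes}, be reduced to one of a fixed finite list of normal forms. This handles the ``RLC'' part of GS-RLC.

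The heart of the argument is the induction on the number of generators. The base case (a single spider with a local Clifford on its unique output) is immediate. For the inductive step I would start from a GS-RLC diagram and compose on the right with one additional generator; the only non-trivial cases are those in which the new generator creates a non-trivial interaction between an existing local Clifford and the adjacent Hadamard-mediated edges. In Backens' proof such cases are dispatched by local complementation at an appropriate vertex; here I would instead use pivoting along an appropriate edge (together with \eqref{hloop}, the new axiom (L), and (C1)) to push the offending local Cliffords through. The main obstacle I anticipate is precisely that pivoting is edge-local rather than vertex-local, so certain reductions trivial in the full stabiliser case may require several pivoting steps and an auxiliary choice of neighbour. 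I expect that a finite case analysis on the eight canonical single-qubit real Cliffords, combined with the observation that pivoting preserves the graph-like shape established above, suffices to close every case and yield the desired GS-RLC normal form.
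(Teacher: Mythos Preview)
Your plan is sound and would succeed, but it is heavier than the paper's route.  Both arguments begin the same way: use colour-change (H1), the spider rule, (H2), and Hopf to bring the diagram into a graph-like form in which every vertex is green, every internal edge carries a single $H$, and there are no parallel edges.  From that point the two diverge.

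You continue \`a la Backens, inducting on generators and performing a finite case analysis over the eight single-qubit real Cliffords, invoking pivoting wherever Backens invokes local complementation.  This is correct, and your worry that pivoting is edge-local rather than vertex-local is exactly the right one to flag; it does force extra bookkeeping (choice of an auxiliary neighbour) in several sub-cases.  The paper avoids that analysis entirely.  Once the diagram is in graph-like form, the only obstruction to being GS-RLC is the presence of \emph{internal} green vertices, i.e.\ vertices not attached to any output.  The paper removes such a vertex $u$ directly: choose any neighbour $v$ and pivot along the edge $uv$.  Because $u$ carries no output wire, the Hadamard that pivoting deposits on $u$ is absorbed, and $u$ disappears via (C1)/spider.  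Iterating eliminates all internal vertices, and what remains is literally a graph state with a single-qubit real Clifford on each output---no case analysis on the vertex operators is needed at this stage.

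So your approach buys fidelity to Backens' original template, at the cost of the eight-way case split you anticipate; the paper's approach trades that for the single observation that pivoting along an edge incident to an output-free vertex deletes that vertex.
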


\begin{proof}[Proof Sketch]
  The proof is by induction. Intuitively, every red dot can be turned
  into a green dot using $H$; the spider rule is used to merge green dots
  connected by a wire; parallel $H$-edges are removed using the Hopf
  law. If there is a green dot which is not connected to an output,
  then either this dot is disconnected from the rest of the diagram
  and can be ignored, or the dot can be removed by pivoting
  with one of its neighbours as shown:
  \[
  \vcenter{\hbox{\rotatebox{180}{\input{diag/tri_rem.tikz}}}}
  =\vcenter{\hbox{\rotatebox{180}{\input{diag/tri_rem2.tikz}}}}
  =\vcenter{\hbox{\rotatebox{180}{\input{diag/tri_rem3.tikz}}}}
  =\vcenter{\hbox{\rotatebox{180}{\input{diag/tri_rem4.tikz}}}}
  =\vcenter{\hbox{\rotatebox{180}{\input{diag/tri_rem5.tikz}}}}
  =\vcenter{\hbox{\rotatebox{180}{\input{diag/tri_rem6.tikz}}}}
  \]
  That is, the bottom dot is removed by pivoting along one of its
  incident edges.
\end{proof}

\begin{definition}
  A \emph{reduced} GS-RLC diagram, is a GS-RLC diagram such
  that\\$(1)$ every vertex Clifford operator is one of
  $\vcenter{\hbox{}}$,
  $\vcenter{\hbox{\input{diag/ZH-loop.tikz}}}$,
  $\vcenter{\hbox{}}$ or
  $\vcenter{\hbox{\input{diag/HZH-loop.tikz}}}$ \\$(2)$ two adjacent
  vertices must not both have vertex operators that include an $H$.
\end{definition}

\begin{lemma}
Any angle-free diagram state is equal to a reduced GS-RLC diagram.
\end{lemma}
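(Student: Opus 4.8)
The plan is to mirror Backens' reduction of GS-LC diagrams to reduced GS-LC diagrams~\cite{Backens:2012fk}, but to carry out every step using only the moves available in the angle-free calculus: the spider rule, (H1), (H2), (Hpf), the stabiliser identity $K_v\ket G=\ket G$ of Proposition~\ref{prop:stab-graph-state}, and the pivoting property, which is derivable here thanks to Theorem~\ref{thm:Hloopsufficient}. The essential point is that we are \emph{not} allowed to use local complementation (Proposition~\ref{thm:vandennest}), since in the angle-free theory that is equivalent to the Euler decomposition; replacing each use of local complementation by a use of pivoting is the only real novelty over Backens' argument.

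By Lemma~\ref{lem:all-diags-are-gs-rlc} the diagram state is equal to a GS-RLC diagram, i.e.\ a graph state $D_G$ carrying a single-qubit real Clifford $C_v$ on each output $v$. Up to global phase the single-qubit real Clifford group is the dihedral group of order eight generated by the green $\pi$ (a $Z$) and $H$; its eight elements are exactly the operators $H^aX^cZ^b$ with $a,b,c\in\{0,1\}$. I would call $C_v$ \emph{good} if $c=0$, i.e.\ $C_v\in\{\mathbb 1, Z, H, HZ\}$ (these are the four vertex operators allowed in a reduced GS-RLC diagram, up to the orientation of the $H$-box), and \emph{bad} otherwise. Two bookkeeping facts make the whole reduction terminate almost for free: the good set and the bad set are each closed under post-composing on the graph side with $Z$, i.e.\ under $C\mapsto CZ$; and among the good operators, $C\mapsto CH$ sends $\{H,HZ\}$ into $\{\mathbb 1,X\}$.

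First I would establish clause~$(1)$ by a single sweep over the vertices. If $C_v$ is bad it has the form $H^aXZ^b$, so (using that $XZ$ and $ZX$ agree up to phase) it equals a good operator $H^aZ^b$ times an $X$ acting directly on $\ket G$; by Proposition~\ref{prop:stab-graph-state} that $X$ is absorbed, becoming $Z$'s on $N(v)$, leaving the good operator $H^aZ^b$ on $v$ while composing a $Z$ onto the operator of each neighbour. Because the good set is closed under $\cdot Z$ an already-corrected vertex stays good, and because the bad set is too a not-yet-corrected vertex stays bad; hence one pass leaves every vertex operator in $\{\mathbb 1,Z,H,HZ\}$. For clause~$(2)$, call an edge $uv$ \emph{bad} if $C_u,C_v\in\{H,HZ\}$. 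Pivoting on $uv$ (Theorem~\ref{thm:Hloopsufficient}, together with the fact that $H_uH_vZ_{N(u)\cap N(v)}$ is an involution, so $\ket G = H_uH_vZ_{N(u)\cap N(v)}\ket{G\wedge uv}$) turns $D_G$ into $D_{G\wedge uv}$ while composing $H$ onto $C_u$ and $C_v$ and $Z$ onto the operators of the common neighbours. Thus $C_u,C_v$ become operators in $\{\mathbb 1,X\}$, the common-neighbour operators stay good, and any $X$ produced is cleared exactly as in clause~$(1)$. Consequently the number of vertices whose operator contains an $H$ strictly drops by two under the pivot and is unchanged at every other vertex, so iterating the pivot over bad edges terminates; the resulting diagram is a reduced GS-RLC diagram.

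The hard part is not any single calculation but getting this bookkeeping to close up in the angle-free setting without ever invoking local complementation: one has to check that the single-qubit real Cliffords really are all of the form $H^aX^cZ^b$, that (up to phase) an $X$ on a graph-state output is absorbed by $K_v$, that the good and bad operator sets behave as claimed under $\cdot Z$ and $\cdot H$, and that the pivot therefore lowers the count of $H$-decorated vertices. Given those small facts the argument is a routine termination/induction argument parallel to the complex-stabiliser case, so I expect the write-up to be short.
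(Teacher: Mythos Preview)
Your proposal is correct and follows essentially the same route as the paper's proof sketch: reduce to a GS-RLC diagram, absorb every $X$ into the graph state using $K_v$ so that each vertex operator lies in $\{\mathbb 1,Z,H,HZ\}$, and then eliminate adjacent $H$-type operators by pivoting, which deposits only $Z$'s on the common neighbours. Your explicit termination bookkeeping (closure of the good/bad sets under $\cdot Z$, and the pivot strictly lowering the $H$-count by two) is more detailed than the paper's sketch but adds nothing new conceptually.
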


\begin{proof}[Proof Sketch]
  Any real local Clifford is a combination of $H$, $X$ and $Z$. Notice
  that using Proposition \ref{prop:stab-graph-state}, every $X$ can
  be transformed in $Z$s on its neighbours. As a consequence the
  vertex Clifford operators are either $I$, $Z$, $H$ or
  $HZ$. Moreover, if two adjacent vertices have a vertex operator
  which include an $H$, then one can do a pivoting which is consuming
  the $H$s, transforms the graph and produces $Z$ on the common
  neighbours.  
\end{proof}

Suppose that a pair of GS-RLC diagrams describe states with the same
number of qubits, that is, they have the same set of outout vertices.
Such a pair is called \emph{simplified} if there is no pair of qubits
$u$ and $v$ which are adjacent in at least one diagram and such that
$H$ is applied on $u$ but not $v$ in the first diagram, and on $v$ but
not $u$ in the second diagram.

\begin{lemma}
Any pair of angle-free diagrams of reduced GS-LRC diagrams can be simplified.
\end{lemma}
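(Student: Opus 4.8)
The plan is to argue by induction on a potential measuring the ``Hadamard mismatch'' between the two diagrams. For an output vertex $w$ write $a(w)=1$ if the single-qubit Clifford at $w$ in the first diagram involves an $H$, and $a(w)=0$ otherwise; let $b(w)$ be the analogue for the second diagram, and set $\mu = |\{\,w : a(w)\neq b(w)\,\}|$, a non-negative integer. If the pair is not already simplified, pick a bad pair and, after naming its qubits suitably, obtain outputs $u,v$ that are adjacent in the graph of at least one diagram and satisfy $a(u)=1$, $a(v)=0$, $b(u)=0$, $b(v)=1$. The statement follows once I exhibit, for such a bad pair, a way to rewrite the two diagrams --- within the angle-free \zxcalculus and preserving the semantics --- keeping both diagrams reduced GS-RLC while strictly decreasing $\mu$.

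The move is: choose a diagram whose underlying graph contains the edge $uv$ and pivot along $uv$ inside it. Say this is the first diagram (the other case is the mirror image, interchanging the roles of the two diagrams). Pivoting is available in the angle-free \zxcalculus, since it is exactly what the H-loop axiom \eqref{hloop} provides (Theorem~\ref{thm:Hloopsufficient}) and the H-loop is how $\pi$ is encoded here. By the pivoting property the graph-state part $\ket{G_1}$ becomes $Z_{N(u)\cap N(v)}\ket{G_1\wedge uv}$ together with a fresh $H$ on the wire of $u$ and a fresh $H$ on the wire of $v$ (all neighbourhoods taken in $G_1$). I would then reabsorb these into the vertex operators: at $u$ the operator already involves an $H$, so the fresh one cancels it ($H^2=1$, and in the $HZ$ case $HZH=X$, which Proposition~\ref{prop:stab-graph-state} lets one push into $G_1\wedge uv$ as $Z$'s on the neighbours of $u$), leaving $a(u)=0$; at $v$ the operator has no $H$, so the fresh one produces one, giving $a(v)=1$ (again, any $X$ produced is pushed into the graph). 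Every $Z$ that appears --- on $N(u)\cap N(v)$ and those pushed in --- merely toggles $I\leftrightarrow Z$ and $H\leftrightarrow HZ$ on the vertices it meets, so all vertex operators stay in $\{I,Z,H,HZ\}$ and $v$ is the only vertex that gains an $H$.

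The one point requiring care --- and, I expect, the only non-routine step --- is that the new first diagram is still a \emph{reduced} GS-RLC diagram; this is where the hypothesis that we started reduced is used. The only vertex acquiring an $H$ is $v$, whose neighbours in $G_1\wedge uv$ are exactly the former neighbours of $u$ in $G_1$, and since the first diagram was reduced with $u$ carrying an $H$, none of those carries an $H$: no Hadamard-Hadamard adjacency is created at $v$. Pivoting adds edges only inside $N(u)\cup N(v)$, and every added edge has an endpoint among the former neighbours of $u$ (which are $H$-free), while the pushed-in $Z$'s add no edges; so reducedness is preserved. The resulting first diagram denotes the same state, and its $a$-values are unchanged except that $a(u)$ dropped from $1$ to $0$ and $a(v)$ rose from $0$ to $1$. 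Since $b$ is untouched, $u$ and $v$ are now matched while every other vertex keeps its status, so $\mu$ has decreased by $2$. As $\mu\ge 0$, iterating this move terminates at a simplified pair.

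It then remains to check, in passing, the mirror-image case: when $uv$ is an edge of the second graph rather than the first, one pivots inside the second diagram, where the endpoint carrying an $H$ there (namely $v$) loses it and $u$ gains one; the same neighbourhood analysis applies with $D_1$ and $D_2$ swapped, and again $\mu$ drops by $2$. I would also double-check that the pivoting identity invoked is indeed derivable in the angle-free calculus exactly as presented, i.e.\ that \eqref{hloop} is part of (or immediately derivable from) its axioms.
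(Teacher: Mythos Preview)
Your argument is correct and follows essentially the same approach as the paper's proof sketch: in both, the key move is to pivot on the offending edge so that the $H$ migrates from one endpoint to the other, without disturbing the $H$-status of any other output; iterate until no bad pair remains.

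Your write-up is considerably more careful than the paper's. The paper simply asserts that pivoting ``consumes the $H$ on $u$ and adds an $H$ on $v$'', that no other vertex gains or loses an $H$, and that one may ``apply inductively''. You supply what the sketch omits: an explicit termination measure (your $\mu$, which is clean since both endpoints of any bad pair lie in the mismatch set and each pivot resolves exactly those two), the case split according to which graph contains the edge, the bookkeeping showing vertex operators stay in $\{I,Z,H,HZ\}$ after absorbing the fresh $H$'s and $Z$'s, and---most importantly---the verification that the result is still \emph{reduced}. The paper does not address reducedness preservation at all; your neighbourhood analysis (the only vertex gaining an $H$ is $v$, whose new neighbours are the old neighbours of $u$, all $H$-free, and every edge toggled by pivoting has an endpoint in $N(u)$) is exactly the right argument and fills a genuine gap in the sketch.
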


\begin{proof}[Proof Sketch]
  If there exists a pair $u$, $v$ which are adjacent in the first
  diagram such that $H$ is applied on $u$ and not on $v$, then one can
  apply a pivoting on $u,v$ in this graph. This pivoting consumes the
  $H$ on $u$ and add an $H$ on $v$. This transformation does not
  introduce nor remove $H$ on the other vertices, so this
  transformation can be applied inductively to any pair of vertices
  which do not satisfies the conditions of simplified pairs of
  GS-RLC.
\end{proof}

\begin{theorem}
Given two reduced GS-RLC diagrams $D_1$ and $D_2$ which form a simplified pair, $\denote{D_1} = \denote{D_2}$ if and only if $D_1$ and $D_2$ are identical. 
\end{theorem}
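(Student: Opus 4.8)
The direction ``$\Leftarrow$'' is immediate, since $\denote{\cdot}$ is a functor. All the content is in the converse, and the plan is to show that the combinatorial data of a reduced GS-RLC diagram --- its underlying labelled graph together with the vertex operator at each output --- is uniquely recoverable, under the stated hypotheses, from its interpretation.

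The first step is to pass from states to stabiliser groups. Every diagram state in the image of $\denote{\cdot}$ is a real stabiliser state, and such a state is determined --- up to the sign ambiguity that is all that survives of global phase in $\wfdhilb$ --- by its stabiliser subgroup $S\le P_n$ (with its signs); conversely the state determines $S$. Hence $\denote{D_1}=\denote{D_2}$ in $\wfdhilb$ is equivalent to $S(D_1)=S(D_2)$. For a GS-RLC diagram $D$ representing the state $\big(\bigotimes_v U_v\big)\ket G$ with each $U_v\in\{I,Z,H,HZ\}$, the stabiliser group $S(D)$ is generated by the conjugates $\big(\bigotimes_w U_w\big)\,K_v\,\big(\bigotimes_w U_w\big)^{\dagger}$ of the canonical graph-state stabilisers $K_v=X_v\prod_{u\in N(v)}Z_u$ of Proposition~\ref{prop:stab-graph-state}. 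It then remains to prove a purely combinatorial statement: $(G,(U_v)_v)\mapsto S(D)$ is injective on pairs satisfying conditions (1)--(2) of ``reduced'' together with the simplified-pair hypothesis.

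I would recover the data in three stages. \textbf{(a) The Hadamard pattern} $\mathcal H=\{v: U_v\in\{H,HZ\}\}$: for $v\notin\mathcal H$ the generator coming from $K_v$ carries $X$ (up to sign) on site $v$ while those coming from the neighbours $u\in N(v)$ carry $Z$ there, and conjugation by $H$ or $HZ$ swaps these roles, so for $v\in\mathcal H$ the pattern at site $v$ is exchanged. Using condition (2) --- no two adjacent vertices both in $\mathcal H$ --- one argues that $S(D_1)=S(D_2)$ already forces $\mathcal H_1=\mathcal H_2$. \textbf{(b) The graph} $G$: writing $\mathcal H$ for the common Hadamard pattern, $H_{\mathcal H}^{\dagger}\denote{D_1}$ and $H_{\mathcal H}^{\dagger}\denote{D_2}$ are $Z$-decorated graph states with the same stabiliser group, and the graph of a $Z$-decorated graph state is recovered from its stabiliser group in the usual way (the elements of $X$-weight one are exactly the $K_v$, and $u\in N(v)$ iff the one with $X$ on $v$ has $Z$ on $u$); since the outputs are labelled this pins down $G$. \textbf{(c) The residual Paulis}: with $G$ and $\mathcal H$ fixed, the only remaining freedom in $U_v$ is the binary choice $I$ versus $Z$ (for $v\notin\mathcal H$) or $H$ versus $HZ$ (for $v\in\mathcal H$), and that bit is exactly the sign of the corresponding $K_v$-generator, so equality of the signed stabiliser groups forces equality of all the $U_v$. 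At every stage the ``reduced'' conditions are precisely what guarantee that these four operators are the only candidates, so that the recovery is unambiguous. Putting (a)--(c) together, $D_1$ and $D_2$ have the same graph and the same vertex operators, hence are identical.

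The main obstacle is stage (a): a fixed stabiliser state has in general several GS-LC representations, related by local complementations that move the Hadamard pattern around, so $\mathcal H$ is \emph{not} determined by $S$ for an arbitrary GS-LC diagram. The purpose of the simplified-pair hypothesis is exactly to forbid the configuration through which such a local complementation --- equivalently, a pivot, by Theorem~\ref{thm:Hloopsufficient} --- could interpolate between $D_1$ and $D_2$, namely an edge $uv$ carrying $H$ on $u$ but not $v$ in one diagram and on $v$ but not $u$ in the other; once this is excluded, a short case analysis on such edges shows that any discrepancy between $\mathcal H_1$ and $\mathcal H_2$ would produce a Pauli operator in $S(D_1)\setminus S(D_2)$. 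This is the real/angle-free counterpart of the uniqueness argument for reduced GS-LC diagrams in \cite{Backens:2012fk}, whose structure I would follow, using pivoting in place of general local complementation wherever a rewrite within the calculus is needed.
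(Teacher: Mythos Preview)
Your three-stage plan---first pin down the Hadamard pattern, then the graph, then the residual $Z$'s---is exactly the structure of the paper's proof; the paper writes $\denote{D_i}=H_{A_i}Z_{B_i}\ket{G_i}$, shows $A_1=A_2$ first, and then invokes an external lemma (Lemma~3 of \cite{MP2013}) for your stages~(b) and~(c). For stage~(a) the paper makes your ``short case analysis'' explicit via an anticommuting-stabiliser argument: if $u\in A_1\setminus A_2$, the reduced condition~(2) together with the simplified-pair hypothesis force $N_{G_1}(u)\cap(A_1\Delta A_2)=\emptyset$, so the common state is a simultaneous eigenvector of $Z_uZ_{N_{G_1}(u)}$ and $X_uZ_{N_{G_2}(u)}$, which anticommute.
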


\begin{proof}
  Since $D_1$ and $D_2$ are reduced GS-RLC diagrams, there exist two graphs $G_1 = (V,E_1)$ and $G_2 = (V,E_2)$, and four
  subsets $A_1,A_2,B_1,B_2\subseteq V$ such that $\denote{D_i} =
  H_{A_i}Z_{B_i} \ket {G_i}$, where $H_A = \bigotimes_{u\in A}
  H_u$. Diagrams $D_1$ and $D_2$ are identical iff $A_1=A_2$,
  $B_1=B_2$ and $G_1=G_2$. First we show that $A_1=A_2$. Notice that
  $\denote{D_1} = \denote{D_2}$ iff $H_{A}Z_{B_1}\ket{G_1} =
  Z_{B_2}\ket{G_2}$ where $A =A_1\Delta A_2$ is the symmetric
  difference of $A_1$ and $A_2$. By contradiction, for any $u\in A_1\Delta
  A_2$,  $\ket{G_1}$ is a fix point of $X_uZ_{N_{G_1}(u)}$, so $Z_{B_1}
  \ket{G_1}$ is an eigenvector of  $X_uZ_{N_{G_1}(u)}$. Moreover,
  since $D_1$ is in a reduced form there is no $H$ applied on qubits
  adjacent to $u$, so $H_{A}Z_{B_1}\ket{G_1}$ is an eigenvector of
  $Z_uZ_{N_{G_1}(u)}$. Indeed  $H_{A}Z_{B_1}\ket{G_1} = H_{A}Z_{B_1}X_uZ_{N_{G_1}(u)}\ket{G_1}=\pm  Z_uZ_{N_{G_1}(u)}H_{A}Z_{B_1}\ket{G_1}$. Regarding the second state, $Z_{B_2}\ket{G_2}$
  is an eigenvector of $X_uZ_{N_{G_2(u)}}$. The two operator
  $X_uZ_{N_{G_2(u)}}$ and $Z_uZ_{N_{G_1}(u)}$ are anti commuting so
  they cannot have a common non-zero eigenvector, as a
  consequence $A_1 = A_2$. Thus $\denote{D_1} = \denote{D_2}$ implies
  $Z_{B_1}\ket {G_1} = Z_{B_2}\ket{G_2}$. Moreover, it has been proved (Lemma 3 in
  \cite{MP2013}) that $Z_{B_1}\ket {G_1} = Z_{B_2}\ket{G_2}$ implies  $B_1 = B_2$ and $G_1 = G_2$. As
  a consequence the two diagrams are identical.
\end{proof}

\section{Conclusion and Perspectives}

We have introduced a new calculus, intermediate between the
\zxcalculus and the \zxcalculus augmented with the Euler decomposition
of $H$. As the introduction of the Euler decomposition was driven by
local complementation, the new axiom we consider, namely the H-loop,
is driven by another graph transformation, namely pivoting. We prove
the H-loop axiom cannot be derived in the plain \zxcalculus, and is
strictly weaker than the Euler decomposition of $H$. When restricted
to $0$- and $\pi$-rotations this new calculus is complete for real
stabiliser quantum mechanics. Moreover this restricted language admits
a simple equivalent angle-free calculus. We believe this angle-free
calculus will be the cornerstone for an axiomatisation of real quantum
mechanics. Real quantum mechanics is known to be universal for quantum
computing, moreover the restriction to the real field provides some
useful simplifications in terms of diagrammatic quantum mechanics (for
example, the object $A$ and its dual $A^*$ have the same
interpretation). Another example is that, when restricted to real
numbers, the unbiased bases are perfectly captured by the
complementary observables $X$ and $Z$ of the \zxcalculus, whereas the
axiomatisation of the third (complex) mutually unbiased base for
qubit, albeit possible (see \cite{Lang:2011fk}) is less
intuitive.   On the other hand applications which require complex
numbers like local tomography cannot be captured by this intermediate
language, and require additional axioms (e.g. Euler decomposition of
$H$).



\subsubsection*{Acknowledgments. } 
 This work has been partially funded by the ANR-10-JCJC-0208 CausaQ grant. 
\bibliography{pivot}
\end{document}